\def\gcd{\mathop{\rm gcd}}
\def\diag{\mathop{\rm diag}}
\def\rank{\mathop{\rm rank}}
\def\MM{\mathop{\mathrm M}\nolimits}
\def\amc{\mathop{\text{\small\sc AMC}}\nolimits}
\def\mbc{\mathop{\text{\small\sc MBC}}\nolimits}
\def\wgt{\mathop{\rm wgt}\nolimits}
\def\ker{\mathop{\rm Ker}}
\def\im{\mathop{\rm im}}
\def\css{\mathop{\rm CSS}\nolimits}
\def\bs#1{\boldsymbol{#1}}
\newtheorem{lemma}{Lemma}
\newtheorem{statement}[lemma]{Statement}
\def\nz{{\sf И}-{\sf Z}}
\setlist{nosep}
\preprint{arXiv:2506.16910}
\begin{document}

\title{Abelian multi-cycle codes for single-shot error correction.}

\author{Hsiang-Ku Lin}
\affiliation{Department of Physics \& Astronomy, University of
  California, Riverside, California 92521 USA}

\author{Pak Kau Lim}

\affiliation{Department of Physics \& Astronomy, University of
  California, Riverside, California 92521 USA}

\author{Alexey A. Kovalev}

\affiliation{Department of Physics and Astronomy and Nebraska Center
  for Materials and Nanoscience, University of Nebraska, Lincoln,
  Nebraska 68588, USA}

 \author{Leonid P. Pryadko}
 \email{pryadko@google.com}
 \affiliation{Google Quantum AI,  Santa Barbara, California 93117, USA}
 \affiliation{Department of Physics \& Astronomy, University of
   California, Riverside, California 92521 USA}

 \date{\today}
\begin{abstract}
  We construct a family of quantum low-density parity-check codes
  locally equivalent to higher-dimensional quantum hypergraph-product
  (QHP) codes.  Similarly to QHP codes, the proposed codes have highly
  redundant sets of low-weight stabilizer generators, which improves
  decoding accuracy in a fault-tolerant regime and gives them
  single-shot properties.  The advantage of the new construction is
  that it gives shorter codes.  We derive simple expressions for the
  dimension of the proposed codes in two important special cases, give
  bounds on the distances, and explicitly construct some relatively
  short codes.  Circuit simulations for codes locally equivalent to
  4-dimensional toric codes show a (pseudo)threshold close to $1.1\%$,
  better than for toric or surface codes with a similar noise model.
\end{abstract}
\maketitle

\section{Introduction.}

Quantum error correction (QEC) is one of the key enabling technologies
for scalable quantum computation.  Despite a lot of effort in
the community, it took over a quarter century after discovery of
QEC\cite{Shor-FT-1996,gottesman-thesis} for quantum memory experiments
based on surface codes\cite{kitaev-anyons,Bravyi-Kitaev-1998,%
  Dennis-Kitaev-Landahl-Preskill-2002} to demonstrate logical error rates that
are well in the scalable
regime\cite{google-2023-suppressing,Paetznik-etal-trapped-ions-2024,%
  Bluvstein-etal-Lukin-2024,google_Quantum_AI-Y1-2025}.  A major
difficulty here is the need to operate fault-tolerantly, in the
presence of errors that occur during the syndrome measurement.

A general approach to fault-tolerance (FT) is to use redundant
syndrome measurements.  For quantum memory and logical Clifford gates,
the redundancy can be gained from repeated measurements, as it is
sufficient to track just the Pauli basis associated with the current
error.  The resulting classical syndrome-based decoding problem can be
solved sequentially, using measured syndrome data from a few cycles at
a
time\cite{Dennis-Kitaev-Landahl-Preskill-2002,Gong-Cammerer-Renes-2024}.
On the other hand, non-Clifford gates needed for universal quantum
computation require real-time decoding.  Even though some delay can be
tolerated at the cost of increased circuit complexity, shorter lag
times are strongly
preferable\cite{Skoric-Browne-Barnes-Gillespie-Campbell-2023}.  From
this viewpoint, optimal is single-shot
decoding\cite{Bombin-2015,Brown-Nickerson-Browne-2016,Campbell-2018},
where syndrome data from each measurement cycle is processed right
after it becomes available.

Single-shot fault-tolerant QEC requires measuring some redundant
stabilizer generators.  This redundancy helps to control measurement
errors and thus enables independent decoding of each batch of syndrome
data.  Such ideas go back to the seminal
work\cite{Dennis-Kitaev-Landahl-Preskill-2002} of Dennis et al.;
various coding theory aspects have been studied in, e.g.,
Refs.~\onlinecite{Fujiwara-2014,Ashikhmin-Lai-Brun-2014,%
  Ashikhmin-Lai-Brun-2016}.  Single-shot properties have been studied
numerically and analytically for a number of code families, including
higher-dimensional hyperbolic codes\cite{Breuckmann-Londe-2020}, as
well as higher-dimensional
toric\cite{Breuckmann-Duivenvoorden-Michels-Terhal-2017}, and more
general quantum hypergraph-product (QHP)
codes\cite{Zeng-Pryadko-2018,%
  Zeng-Pryadko-hprod-2020,%
  Quintavalle-Vasmer-Roffe-Campbell-2021,Higgott-Breuckmann-2023}.

Unfortunately, the product constructions tend to give rather long
codes.  Recently, the effect of natural redundancy in two-block and
generalized-bicycle (GB) codes\cite{Kovalev-Pryadko-Hyperbicycle-2013}
with stabilizer generators of weight six has been studied by three of
the present authors\cite{Lin-Liu-Lim-Pryadko-circuits-2024}.  These,
and more general 2BGA
codes\cite{Wang-Pryadko-2022,Wang-Lin-Pryadko-2023,Lin-Pryadko-2023},
including the bivariate-bicycle (BB)
codes\cite{Bravyi-etal-Yoder-2023,Liang-Liu-Song-Chen-2025,%
  Berthusen-etal-Gottesman-2025,%
  Wang-etal-Deng-2025}, have attracted attention as the
family which contains short codes with relatively high rates
and distances (substantially improving over the surface/toric codes)
and at the same time with relatively high thresholds.  While GB codes
constructed in Ref.~\onlinecite{Lin-Liu-Lim-Pryadko-circuits-2024}
saturate an upper bound on the syndrome distance, their single-shot
properties turned out to be limited by the lack of
confinement.

More recently, Aasen et al.\ proposed a family of four-dimensional
geometric toric
codes\cite{Aasen-Haah-Hastings-Wang-2025,Aasen-etal-Svore-2025}
locally equivalent to 4D toric codes, but on a 4-torus with
periodicity vectors not necessarily along the principal lattice
directions.  They analyzed the performance of the constructed codes in
circuit simulations of quantum memory experiments, and studied their
single-shot properties.  They also designed a number of protected
operations on logical qubits, including state injection and the
unitaries generating the entire Clifford group on the logical qubits.

In this work, we construct a family of multi-block chain (MBC)
complexes, and their implementation using abelian group algebra
matrices, which we call abelian multi-cycle (AMC) codes.  This
construction is strictly more general than 4D geometric toric codes in
Refs.~\onlinecite{Aasen-Haah-Hastings-Wang-2025,Aasen-etal-Svore-2025},
in the sense that all their codes are included in our construction.
Our MBC codes are related to higher-dimensional QHP
codes\cite{Zeng-Pryadko-2018,Zeng-Pryadko-hprod-2020} in exactly the
same way as two-block codes are related to the original
(two-dimensional) QHP codes\cite{Tillich-Zemor-2009}, and they share
many properties with both the two-block codes and higher-dimensional
QHP codes.  In addition to the general construction, we give analytic
expressions for the code dimensions in two important general cases,
and also give lower (existence) and upper bounds for the distances.
As an important example, we study the family of AMC codes locally
equivalent to 4-dimensional toric codes.  All these codes have
stabilizer generators of weight $6$ and encode $k=6$ logical qubits.
We explicitly construct a number of such codes, calculate their
parameters, and perform circuit simulations to compute a
(pseudo)threshold and analyze the accuracy of few-shot sliding-window
(SW) decoding.

\section{Definitions.}
For a given finite (Galois) field $F\equiv \mathbb{F}_q$, of size
$|F|=q=p^m$ and prime characteristic $p$, an $F$-linear code of length
$n$ and dimension $k$ is a linear space ${\cal C}\subseteq F^n$ of
dimension $k$.  Any set of $k=\rank G$ linearly-independent rows of an
$F$-valued \emph{generator} matrix $G$ with $n$ columns forms a basis
of the code ${\cal C}\equiv {\cal C}_G$ generated by $G$; this code
has length $n$ and dimension $k$.  The dual code
${\cal C}^\perp\equiv{\cal C}_G^\perp$ of dimension $n-k$ is formed by
all vectors in $F^n$ orthogonal to non-zero vectors in ${\cal C}$ or,
equivalently, to the rows of $G$, a \emph{check}
matrix of the code ${\cal C}_G^\perp$.

For a given finite field $F$ and a finite group $G$ of order
$|G|=\ell$, the (Hopf) \emph{group algebra} (also, a ring) $F[G]$ is
defined as an $F$-linear space of all formal sums
\begin{equation}
  \label{eq:algebra-element}
  x\equiv \sum_{g\in G}x_g g,\quad x_g\in F,
\end{equation}
where group elements $g\in G$ serve as basis vectors,
equipped with the product naturally associated with the group
operation,
\begin{equation}
  \label{eq:FG-product}
  ab=\sum_{g\in G}\biggl(\sum_{h\in G} a_h b_{h^{-1}g}\biggr) g, \quad a,b\in F[G].
\end{equation}
Evidently, Eq.~(\ref{eq:algebra-element}) defines a one-to-one map
between any vector $\bs x\in F^\ell$ with coefficients $x_g$ labeled
by group elements and a group algebra element $x\in F[G]$, and a
related map between sets of vectors and sets of group algebra
elements.  In this work we consider only group algebras constructed
from abelian groups.  An abelian group code in $F^\ell$ is a map
(\ref{eq:algebra-element}) of an ideal $J$ in an abelian ring $F[G]$,
defined as an $F$-linear space of elements of $F[G]$ such that for any
$x\in J$ and any $r\in F[G]$, $rx\in J$.  An ideal $J\le F[G]$ is
called\cite{Borello-delaCruz-Willems-2022} {\em checkable\/} if there
is an element $a\in F[G]$ such that $J=\{x| x\in F[G], a x=0\}$.

A \emph{chain complex} is a sequence of abelian groups and
homomorphisms between pairs of consecutive groups such that the image
of each homomorphism is included in the kernel of the next.  Here we
will be concerned with the special case of chain complexes of
finite-dimensional vector spaces
$\ldots,\mathcal{A}_{j-1}, \mathcal{A}_j,\ldots$ over a finite field
$F$, and we denote the space dimensions
$n_j\equiv n_j({\cal A})=\dim {\cal A}_j$.  In this case the boundary
operators are linear transformations
$\partial_j:{\cal A}_{j-1}\leftarrow {\cal A}_j $ that map between
each pair of neighboring spaces, with the requirement
$\partial_j\partial_{j+1}=0$, $j\in\mathbb{Z}$.  We define a
$D$-complex ${\cal A}\equiv {\cal K}(A_1,\ldots,A_D)$, a bounded chain
complex which only contains $D+1$ non-trivial spaces with fixed bases,
in terms of $n_{j-1}\times n_j$ matrices $A_j$ over $F$ serving as the
boundary operators, $j\in\{1,\ldots,D\}$:
\begin{equation}
  \label{eq:chain-complex}
  {\cal A}:\;
  \ldots \leftarrow\{0\}\stackrel{\partial_0}\leftarrow {\cal
    A}_0\stackrel{A_1}\leftarrow
  {\cal A}_1\ldots \stackrel{A_{D}}\leftarrow
  {\cal A}_{D}\stackrel{\partial_{D+1}}\leftarrow
  \{0\} \ldots
\end{equation}
Here the products of neighboring matrices must be zero,
$A_{j-1}A_{j}=0$, $j\in\{2,\ldots,D\}$.  In addition to boundary
operators given by the matrices $A_j$, implicit are the trivial
operators $\partial_0:\{0\}\leftarrow {\cal A}_0$ and
$\partial_{D+1}: {\cal A}_D\leftarrow \{0\}$ (with the image being the
zero vector in ${\cal A}_D$) treated formally as rank-zero
$0\times n_0$ and $n_D\times 0$ matrices.

Elements of the subspace $\im (\partial_{j})\subseteq {\cal A}_{j-1}$
are called boundaries; in our case these are linear combinations of
columns of $A_{j}$ and, therefore, form an $F$-linear code with the
generator matrix $A_{j}^T$, $\im (A_{j})=\mathcal{C}_{A_{j}^T}$.
Elements of $\ker(\partial_j)\subset {\cal A}_j$ are called cycles; in
our case these are vectors in an $F$-linear code with the parity check
matrix $A_j$, $\ker (A_j)=\mathcal{C}_{A_j}^\perp$.  

Because of the orthogonality $\partial_j\partial_{j+1}=0$, all
boundaries are necessarily cycles,
$\im(\partial_{j+1})\subseteq \ker(\partial_j) \subseteq {\cal A}_j$.
The structure of the cycles in ${\cal A}_j$ that are not boundaries is
described by the $j$\,th homology group,
\begin{equation}
  {H}_j({\cal A})\equiv H(A_j,A_{j+1})=
  \ker(A_{j})/\im(A_{j+1}).
\label{eq:homo-group}
\end{equation}
Group quotient here means that two cycles that differ by a boundary
are considered equivalent; non-zero elements of
$\mathcal{H}_j(\mathcal{A})$ are equivalence classes of homologically
non-trivial cycles.  Explicitly, the equivalence of $\bs x$ and
$\bs y$ in $\mathcal{A}_j$, denoted as $\bs x\simeq \bs y$, implies
that for some $\bs\alpha\in {\cal A}_{j+1}$,
$\bs y=\bs x+ \bs\alpha A_{j+1}^T$.  The rank of $j$-th homology group
is the dimension of the corresponding vector space; one has
\begin{equation}
  \label{eq:homo-rank}
  k_j\equiv \rank H_j(\mathcal{A})=n_j-\rank A_j-\rank A_{j+1}  .
\end{equation}
The homological \emph{distance} $d_j$ is the minimum Hamming weight of
a non-trivial element (any representative) in the homology group
$H_j(\mathcal{A})\equiv H(A_j,A_{j+1})$,
\begin{equation}
  d_j=\min_{ \bs 0\not\simeq \bs x\in H_j(\mathcal{A})} \wgt \bs x
  =\min_{\bs x\in \ker({A}_j)\setminus \im(A_{j+1})}\wgt \bs x.
  \label{eq:homo-distance}
\end{equation}
By this definition, $d_j\ge1$.  To address singular cases, the minimum
of an empty set is defined as infinity; $k_j=0$ is always equivalent
to $d_j=\infty$.  In particular, the distance of the homology group
$H_0(\mathcal{A})$ is $d_0=1$, unless $A_1$ has full row rank, giving
$k_0=0$, in which case $d_0=\infty$.  In the case of the
homology group $H_{D}(\mathcal{A})$, the distance $d_{D}$ is that of
the $F$-linear code with the check matrix $A_D$.

In addition to the homology group $H(A_j,A_{j+1})$, there is also a
\emph{co-homology} group
$\tilde{H}_j(\tilde{\cal A})=H(A_{j+1}^T,A_j^T)$ of the same rank
(\ref{eq:homo-rank}); this is associated with the \emph{co-chain
  complex} $\tilde{\cal A}$ formed from the transposed matrices
$A_j^T$ taken in the opposite order.  A quantum CSS
code\cite{Calderbank-Shor-1996,Steane-1996} with generator matrices
$G_X=A_j$ and $G_Z=A_{j+1}^T$ is isomorphic with the direct sum of the
groups $H_j$ and $\tilde{H}_j$,
\begin{equation}
  \label{eq:css-code-homology}
\css(A_j,A_{j+1}^T)\cong H(A_j,A_{j+1})\oplus H(A_{j+1}^T,A_j^T).
\end{equation}
The two terms correspond to $Z$ and $X$ logical operators,
respectively, with stabilizer generator matrices $H_X=A_j$ and
$H_Z=A_{j+1}^T$.  This gives for the homological distances in the
chain complex and in the co-chain complex, respectively, $d_j=d_Z$ and
$\tilde{d}_j=d_X$.

Quantum codes operate in a fault-tolerant (FT) regime, where errors
may happen during the syndrome measurement.  Commonly used in such a
situation is the tactic of redundant measurements.  Assuming each
operator being measured has the corresponding row in $H_X$ or $H_Z$,
redundancy implies the existence of non-trivial matrices $M_X$ and
$M_Z$ such that $M_X H_X=0$, $M_Z H_Z=0$.  Rows of $M_X$ and $M_Z$,
respectively, correspond to $X$ and $Z$
metachecks\cite{Campbell-2018}.  Evidently, a quantum CSS code with
redundant measurements forms a $4$-chain complex with boundary
operators $M_X$, $H_X$, $H_Z^T$, $M_Z^T$.  In particular, Campbell et
al.\ argued\cite{Campbell-2018,Quintavalle-Vasmer-Roffe-Campbell-2021}
that longer complexes constructed as tensor products of $1$-complexes,
have nice error-correction properties in the FT regime and they can be
used for single-shot error correction.

Tensor product $\mathcal{A}\times \mathcal{B}$ of two chain complexes
$\mathcal{A}$ and $\mathcal{B}$ is defined as the chain complex
formed by linear spaces decomposed as direct sums of Kronecker
products,%
\begin{equation}
  (\mathcal{A}\times \mathcal{B})_j
  =\bigoplus\nolimits_{i\in\mathbb{Z}}\mathcal{A}_i \otimes
  \mathcal{B}_{j-i},\label{eq:tp-spaces}
\end{equation}
with the action of the boundary operators
\begin{equation}
  \label{eq:tp-boundary}
  \partial'''(\bs a\otimes \bs b)\equiv \partial' \bs a\otimes \bs b+(-1)^i \bs a\otimes
  \partial'' \bs b,
\end{equation}
where $\bs a\in\mathcal{A}_i$, $\bs b\in\mathcal{B}_{j-i}$, and the boundary
operators $\partial'$, $\partial''$, and $\partial'''$ act in
complexes $\mathcal{A}$, $\mathcal{B}$, and
$\mathcal{A}\times \mathcal{B}$, respectively.  Notice that the two
terms in Eq.~(\ref{eq:tp-boundary}) are supported in different
subspaces of the expansion (\ref{eq:tp-spaces}).  When both
$\mathcal{A}$ and $\mathcal{B}$ are \emph{bounded}, the dimension
$n_j(\mathcal{C})$ of a space $\mathcal{C}_j$ in the product
$\mathcal{C}=\mathcal{A}\times \mathcal{B}$ is%
\begin{equation}
  \label{eq:tp-nj}
  n_j (\mathcal{C})=\sum\nolimits_{i} n_i (\mathcal{A}) \,n_{j-i}
  (\mathcal{B}).
\end{equation}
The homology groups of the product $\mathcal{C}={\cal A}\times {\cal B}$ are
isomorphic to a simple expansion in terms of those of ${\cal A}$ and
${\cal B}$ which is given by the K\"unneth formula,
\begin{equation}
  \label{eq:tp-Kunneth}
  H_j (\mathcal{C})\cong\bigoplus\nolimits_{i} H_i (\mathcal{A})\,
  \otimes\,H_{j-i}
  (\mathcal{B}).
\end{equation}
One immediate consequence is that the rank $k_j(\mathcal{C})$ of the
$j$\,th homology group $H_j(\mathcal{C})$ is
\begin{equation}
  \label{eq:tp-kj}
  k_j (\mathcal{C})=\sum\nolimits_{i} k_i (\mathcal{A}) \,k_{j-i}
  (\mathcal{B}).
\end{equation}

For the homological distances in a product of two
arbitrary-length complexes, we have some
bounds\cite{Zeng-Pryadko-hprod-2020},
\begin{equation}
  \label{eq:upper-general}
  \min_{i\in\mathbb{Z}}  \max\biglb( d_i(\mathcal{A}),d_{j-i}(\mathcal{B})\bigrb)
  \le
  d_{j}(\mathcal{C})\le \min_{i\in\mathbb{Z}} d_{i}(\mathcal{A})
  d_{j-i}(\mathcal{B}).
\end{equation}
The exact distances are known only in the important case where one of
the factors is a $1$-complex\cite{Tillich-Zemor-2009,Zeng-Pryadko-2018,%
  Zeng-Pryadko-hprod-2020}.  With a $D$-complex $\mathcal{A}$ in
Eq.~(\ref{eq:chain-complex}) and a $1$-complex
$\mathcal{B}\equiv \mathcal{K}(B_1)$, the homological distances
are\cite{Zeng-Pryadko-hprod-2020}%
\begin{equation}
  \label{eq:result-thm}
  d_j(\mathcal{C})
  =\min\biglb( d_j(\mathcal{A}) d_0(\mathcal{B}),
  d_{j-1}(\mathcal{A}) d_1(\mathcal{B})\bigrb).
\end{equation}

In the following we refer to a $D$-fold product of $1$-complexes as a
$D$-dimensional QHP
complex\cite{Tillich-Zemor-2009,Zeng-Pryadko-2018,Zeng-Pryadko-hprod-2020}.
Such a $D$-complex ${\cal Q}$ has $D+1$ non-trivial linear spaces
${\cal Q}_j$, with $0\le j\le D$, where the space at level $j$ is
formed as a direct sum of $m_j^{(D)}$ product subspaces,
$${\cal A}^{(1)}_{j_1}\times{\cal A}^{(2)}_{j_2}\times \ldots \times
{\cal A}^{(D)}_{j_D}.$$ Here the level index $j=j_1+j_2+\ldots+ j_D$
is a sum of individual level indices $j_i\in \{0,1\}$.  The
number of such terms is given by a binomial,
\begin{equation}
  m_j^{(D)}=\binom{D}{j}.\label{eq:qhp-block-dim}
\end{equation}
In particular, consider a $1$-complex ${\cal A}\equiv\mathcal{K}(A)$,
where $A$ is a square $n\times n$ matrix such that the $F$-linear code
${\cal C}_{A_1}^\perp$ has parameters $[n,k,d]$.  Its $D$-fold tensor
product with itself, $\mathcal{Q}={\cal A}^{\times D}$, has space dimensions
\begin{equation}
  \label{eq:QHP-power-nj}
  n_j^{(D)}=\binom{D}{j} n^D,\quad 0\le j\le D,
\end{equation}
and the dimension and the homological distance of its $j$\,th homology
group $H_j(\mathcal{Q})$ are, respectively,
\begin{equation}
  \label{eq:QHP-power-kj_dj}
  k_j^{(D)}=\binom{D}{j} k^D,\quad 
  d_j^{(D)}= d^j,\quad 0\le j\le D.
\end{equation}

\section{Construction and code properties.}
\subsection{Multi-block complexes and associated quantum codes}
We construct a \emph{multi-block chain} (MBC) complex as a
generalization of $D$-dimensional QHP codes in exactly the same
fashion as the two-block codes have been constructed as a
generalization of regular (2D) HP
codes\cite{Kovalev-Pryadko-Hyperbicycle-2013}.  Namely, given matrices
$A_1$ and $B_1$ over $F$, with dimensions
$n_0(\mathcal{A})\times n_1(\mathcal{A})$ and
$n_0(\mathcal{B})\times n_1(\mathcal{B})$, respectively, the tensor
product $\mathcal{P}=\mathcal{A}\times \mathcal{B}$ of the complexes
$\mathcal{A}\equiv \mathcal{K}(A_1)$ and
$\mathcal{B}\equiv \mathcal{K}(B_1)$, explicitly,
$$
\mathcal{P}:
\ldots\{0\}\stackrel{\partial_0}\leftarrow {\cal A}_0\times {\cal B}_0
\stackrel{P_1}\leftarrow
\begin{array}[c]{c}
  {\cal A}_0\times{\cal B}_1\\   {\cal A}_1\times{\cal B}_0
\end{array}
\stackrel{P_2}\leftarrow {\cal A}_1\times{\cal B}_1
\leftarrow
  \{0\} \ldots,
$$
has the boundary operators
[cf.~Eq.~(\ref{eq:tp-boundary})]
\begin{equation}
  \label{eq:2D-matrices-kp}
  {P}_1=  \left[I_A\otimes B_1,A_1\otimes I_B\right],\quad
  {P}_2 =\left[
    \begin{array}[c]{c}
      A_1\otimes I_B \\-I_A\otimes B_1
    \end{array}
  \right],
\end{equation}
where $I_A\equiv I(\mathcal{A}_0)$ and $I_B\equiv I(\mathcal{B}_0)$
are the identity operators on the spaces $\mathcal{A}_0$ and
$\mathcal{B}_0$, respectively.  The product vanishes, $P_1P_2=0$,
because the two blocks, $A\equiv A_1\otimes I_B$ and
$B\equiv I_A\otimes B_1$, commute.  The
generalization\cite{Kovalev-Pryadko-Hyperbicycle-2013} is evident: for
any pair of $\ell\times \ell$ square commuting matrices, $AB=BA$, the
two-block code has the CSS matrices
\begin{equation}
  \label{eq:2D-matrices}
  H_X\equiv P_1=\left[B,A\right],\quad H_Z^T\equiv P_2=\left[
    \begin{array}[c]{c}
      A\\-B
    \end{array}\right].
\end{equation}
We regard this construction as the smallest (two-dimensional)
multi-block chain complex, and denote it ${\cal P}=\mbc(A,B)$, with
the requirement $AB=BA$.

Similarly, to define a three-dimensional generalization, a
\emph{3-chain complex} $\mbc(A,B,C)$ from three pairwise commuting
matrices, we insert an additional $1$-complex into the product,
$\mathcal{Q}\equiv \mathcal{A}\times \mathcal{B}\times
\mathcal{C}=\mathcal{P}\times {\cal C}$, with the non-trivial
subspaces
\begin{eqnarray*}
  \mathcal{P}_0\times \mathcal{C}_0
  &\stackrel{Q_1}\leftarrow&
               \begin{array}[c]{c}
                 \mathcal{P}_0\times \mathcal{C}_1\\
                 \mathcal{P}_1\times \mathcal{C}_0\\
               \end{array}\stackrel{Q_2}\leftarrow
  \begin{array}[c]{c}
    \mathcal{P}_1\times \mathcal{C}_1\\
    \mathcal{P}_2\times \mathcal{C}_0
  \end{array}\stackrel{Q_3}\leftarrow
               \mathcal{P}_2\times \mathcal{C}_1.
\end{eqnarray*}
The subspaces in direct sums are stacked vertically
[cf.~Eq.~(\ref{eq:tp-spaces})], and ordered alphabetically by the
space indices.  The structure of the corresponding boundary operators
follows directly from the subspace decomposition and the sign
convention~(\ref{eq:tp-boundary}),
$Q_1= \left[I\otimes C_1, P_1\otimes I\right]$,%
$$
Q_2= \left[
  \begin{array}[c]{cc}
    P_1\otimes I & 0 \\
    -I\otimes C_1& {P}_2 \otimes I
  \end{array}
\right],\;\,\text{and}\,\;
Q_3= \left[
  \begin{array}[c]{c}
    P_2\otimes I \\ I\otimes C_1
  \end{array}\right].
$$
With Eq.~(\ref{eq:2D-matrices-kp}), it is easy to verify that the
resulting matrices contain only three blocks,
$A\equiv A_1\otimes I_B\otimes I_C$,
$B\equiv I_A\otimes B_1\otimes I_C$, and
$C\equiv I_A\otimes I_B\otimes C_1$, and can be explicitly written as
\begin{eqnarray}
  Q_1&=&[C\,|\,B\; A],\label{eq:mbc3-1}\\
  Q_2&=&\left[
  \begin{array}[c]{cc|c}
    B&A&0\\ \hline -C&0 &A\\ 0& -C& -B
  \end{array}
  \right], \label{eq:mbc3-2} \\
  Q_3&=&\left[
  \begin{array}[c]{c}
    A\\-B\\ \hline C
  \end{array}
  \right].\label{eq:mbc3-3}
\end{eqnarray}
We used the lines here to emphasize the inherited block structure,
cf.~matrices in Eq.~(\ref{eq:2D-matrices}).  Just like for the case of
two-block codes, the orthogonality can be verified directly,
$Q_1Q_2=0$, $Q_2Q_3=0$, as long as the individual blocks $A$, $B$, and
$C$ commute.

For a general definition, consider a $D$-complex
${\cal Q}\equiv\mbc(A_1,A_2,\ldots,A_D)$
constructed from $D$ commuting square blocks $A_i$, $\ell\times\ell$
matrices over $F$, with boundary operators $\partial_j^{(Q)}=Q_j$,
where $Q_jQ_{j+1}=0$.  Explicitly, the space dimensions (non-zero for
$0\le j\le D$) are
$\dim({\cal Q}_j)\equiv n_j^{(Q)}=\binom{D}{j}\ell$; it is also
convenient to define the block dimensions $m_j\equiv m_j^{(Q)}=\binom{D}{j}$
such that $n_j=m_j\ell$. Then, given an additional block
$A_{D+1}\equiv N$ commuting with those already present, the
$(D+1)$-dimensional chain complex
${\cal R}\equiv\mbc(A_1,A_2,\ldots,A_D,N)$ has the block dimensions
$m_{j}^{(R)}=m_{j-1}^{(Q)}+m_j^{(Q)}=\binom{D+1}{j}$, and the boundary
operators constructed as block matrices,
\begin{eqnarray}
  \label{eq:mbcD-1}
R_1 &=& \bigl[
\begin{array}{cc}
     N, & Q_1
\end{array}
\bigr],\\
R_i &=&\left[
\begin{array}{cc}
Q_{i-1}, &0 \\
(-1)^{i-1} I({m_{i-1}})\otimes N, & Q_i
\end{array}
\right],\;\, 1< i\le D,\qquad \label{eq:mbcD-2}\\
R_{D+1} &=& \left[
\begin{array}{c}
Q_D \\   (-1)^{D} N
\end{array}
\right], \label{eq:mbcD-3}
\end{eqnarray}
where $I(m)$ is an $m\times m$ identity matrix.

For a $D$-dimensional MBC complex with boundary matrices $Q_j$,
$1\le j\le D$, it is easy to verify that all $D$ blocks are used in an
equivalent fashion, i.e., all chain complexes constructed from the
same $D$ commuting blocks taken in different orders are equivalent
(can be obtained from each other by block row and column
permutations).  In particular, matrix $Q_j$ has $j$ non-zero blocks in
each column and $D+1-j$ non-zero blocks in each row.  Furthermore, in
each pair of matrices $Q_{j}$ and $Q_{j+1}^T$ with
$m_{j}=\binom{D}{j}$ column blocks, $1\le j< D$, each pair of matching
column-blocks contains a full set of blocks $\{A_i\}$, exactly $j$
such blocks in $Q_{j}$ and the remaining $D-j$ (transposed) blocks in
$Q_{j+1}^T$.

For future reference, the boundary operators in the 4-complex
${\cal R}\equiv\mbc(A,B,C,D)$ are
\begin{eqnarray}
  \label{eq:mbc4-1}
  R_1&=&\,\biglb[D\,|\,C\,B\,A\bigrb]\,,\\ \label{eq:mbc4-2}
  R_2&=&\left[
         \begin{array}[c]{ccc|ccc}
           C& B& A \\ \hline
           -D&&&B& A&0\\&-D&& -C&0&A\\ &&-D&0&-C&-B\\
         \end{array}
  \right],\\    \label{eq:mbc4-3}
  R_3&=&\left[
         \begin{array}[c]{ccc|c}
           B& A&0\\-C&0&A\\  0&-C&-B\\ \hline
           D&&&    A\\&D&&-B\\ &&D& C\\
         \end{array}
  \right],\;\,\,\\ \label{eq:mbc4-4}
  R_4&=& \left[\begin{array}[c]{ccc|c}A\\-B\\C\\ \hline -D
               \end{array}
  \right].  
\end{eqnarray}
While any non-trivial homology group in this complex may be used to
construct a quantum code, we will focus on
${H}_2(\mathcal{R})=H(R_2,R_3)$ and the corresponding co-homology
group $\tilde H_2({\cal R})$.  This gives a self-dual CSS code with
generator matrices $H_X=R_2$, $H_Z=R_3^T$.

\subsection{Abelian multi-cycle (AMC) codes.}
AMC complexes (and associated \emph{quasi-abelian} quantum CSS codes)
are a special case of multi-block codes where the commuting matrices
are constructed\footnote{Commuting matrices can be constructed for any
  pair of group algebra elements even for a non-abelian
  group\cite{Panteleev-Kalachev-2021}. While there is no direct
  generalization for arbitrary sets of $D>2$ non-abelian group algebra
  elements, less general constructions do exist.  These go outside the
  scope of the present work.} with the help of an abelian group
algebra $F[G]$, see Eq.~(\ref{eq:algebra-element}).  Namely, assuming
the group size is $\ell\equiv |G|$, for an element $a\in F[G]$, the
$\ell\times\ell$ matrix $A\equiv \MM(a)$ is defined by the 
action of $a$ on the group elements,
\begin{equation}
  \label{eq:L-R-action}
  [\MM(a)]_{\alpha,\beta}\equiv \sum_{g\in G}a_g\delta_{\alpha,g\beta},
\end{equation}
where group elements $\alpha,\beta\in G$ are used to index rows and
columns, and $\delta_{\alpha,\beta}=1$ if $\alpha=\beta$ and $0$
otherwise is the Kronecker delta.  It is easy to verify that for group
elements $g\in G$, $\MM(g)$ are permutation matrices forming the
regular $F$-linear representation of $G$.  Also, since transposition
gives the inverse of a permutation matrix, for any $a\in F[G]$, a
transposed matrix $[M(a)]^T$ can be obtained from the group algebra
element\cite{Borello-delaCruz-Willems-2022}
\begin{equation}
  \widehat a\equiv\sum_{g\in G}a_{g^{-1}} g=\sum_{g\in G}a_g g^{-1}, \quad
  [M(a)]^T=M(\widehat a).
  \label{eq:hat-operator}
\end{equation}

Then, given a sequence $\{a_j\}_{j=1}^D\subset F[G]$ of abelian group
algebra elements, with $D\ge2$, the corresponding matrices
$A_j=\MM(a_j)$ commute, and the $D$-complex
$\mathcal{Q}\equiv \mathcal{Q}^{(D)}=\mbc(A_1,A_2,\ldots,A_D)$ can be
constructed as explained in the previous section.  We denote such a
$D$-complex ${\cal Q}= \amc(a_1,a_2,\ldots,a_D)$.  The
corresponding level-$j$ quantum CSS code $\amc_j(a_1,a_2,\ldots,a_D)$,
$ j\le D$, is constructed from the level-$j$ homology
$H_j(\mathcal{Q})$ and co-homology $\tilde H_j(\mathcal{Q})$ groups.
Explicitly,
$$
\amc_j(a_1,a_2,\ldots,a_D)=\css(Q_{j},Q_{j+1}^T),
$$
where $Q_{j}$ is the $j$-th boundary operator (matrix) in the
$D$-complex $\mathcal{Q}$.

To visualize, consider a finite abelian group
$$G=\langle x_1,x_2,\ldots,x_m|r_1(\bs x),r_2(\bs x),
\ldots,r_m(\bs x)\rangle_{\rm abelian},
$$ presented with $m$ commuting generators $x_1$, $x_2$,
\ldots, $x_m$, and $m$ additional independent (monomial) relators
$$
1_G=r_i(\bs x)=\prod_{j=1}^m x_j^{\Delta_j}, \quad \Delta_j\equiv
{\Delta_j^{(i)}},\quad i=1,2,\ldots,m.
$$
Further, elements of the free abelian group
$\langle x_1,x_2,\ldots, x_m\rangle$ are in a one-to-one
correspondence with points of $\mathbb{Z}^m$.  The exponents in each
relator $r_i(\bs x)$ form a basis vector
$\bs\Delta^{(i)}=(\Delta_1^{(i)},\Delta_2^{(i)},\ldots,\Delta_m^{(i)})$
of a lattice on $\mathbb{Z}^m$; individual basis vectors are combined
as columns of a lattice basis matrix $\Delta$.  Any two points on
$\mathbb{Z}^m$ connected by a lattice vector are identified.  Then,
elements of the group $G$ are in a one-to-one correspondence with
inequivalent points in $\mathbb{Z}^m$ which form an $m$-torus
${\cal T}_m$.  The number of inequivalent points (also, the number of
points in a primitive cell of the lattice) is
$n=\left|G\right|=\left|\det\Delta\right|$.  In the absence of
single-variable degree-two relators, the torus $\mathcal{T}_m$ is
covered by $\mathbb{Z}^m$, with a well-defined covering function
$f:\mathbb{Z}^m\to {\cal T}_m$ which is one-to-one in each ball of
some radius $r_{\rm inj}>0$ on $\mathbb{Z}^m$, the injectivity radius.

The support of an $m$-variate polynomial over $F$ corresponds to a set
of points on $\mathbb{Z}^m$; multiplication by a generator $x_j$ gives
a translation along the $j$-axis. Respectively, the support of a group
algebra element $a\equiv a(\bs x)$ corresponds to a set of points on
the torus ${\cal T}_m$.

A group presentation where each relator depends only on the
corresponding variable, $r_i=x_i^{\Delta_i}$, with the lattice basis
matrix $\Delta=\diag(\Delta_1,\Delta_2,\ldots, \Delta_m)$, corresponds
to a decomposition\footnote{Such a decomposition exists according to
  the Fundamental Theorem of finite abelian groups.} of the finite
abelian group as a direct product of cyclic groups,
$$G=C_{\Delta_1}\times C_{\Delta_2}\times \ldots \times C_{\Delta_m}.$$
In such a case we get a torus with periodicity vectors along the
Cartesian axes, and the group order is given by a product,
$|G|=\prod_{i=1}^m \Delta_i$.

We note that the number of generators in a presentation of the same
group may vary.  In particular, while any cyclic group
$C_n=\langle x|x^n\rangle$ has a single-generator presentation,
presentations with two or more variables can always be constructed for
a sufficiently large group.  For example,
\begin{eqnarray*}
  C_{15}
  &=&C_3\times C_5=\langle x,y|x^3,y^5 \rangle_{\rm abelian},\quad
      \Delta={
      \left(\begin{array}[c]{cc}3,&0\\ 0,&5\end{array}\right)},\\
  C_{15}
  &=&\langle x,y|x^4y,xy^4 \rangle_{\rm abelian},\quad
      \Delta={
      \left(\begin{array}[c]{cc} 4,&1\\ 1,&4\end{array}\right).}
\end{eqnarray*}
where the only condition is that the determinant of the 
matrix $\Delta$ matches the group order, $n=\left|\det \Delta\right|$.

\subsection{Equivalent AMC codes}
\label{sec:equiv}
Changing a group presentation, one can try to come up with an
equivalent complex, e.g., with better locality properties.  How much
freedom do we have?  To answer this question, let us first discuss the
symmetries of AMC construction.  The following statement is given
without a proof, as a generalization of Theorem 3 in
Ref.~\onlinecite{Lin-Pryadko-2023}:
\begin{statement}
  \label{th:permutation-equiv}
  For any sequence $\{a_j\}\equiv \{a_j\}_{j=1}^D\subset F[G]$, the $D$-complex
  $\amc(\{a_j\})$ is permutation-equivalent to the $D$-complex
  constructed from the modified sequence:
  \begin{enumerate}[label={\rm (\roman*)}]
  \item A permuted sequence of group algebra elements;
  \item \label{theorem3:1}$ \{\varphi(a_j)\}$, for any automorphism
    $\varphi:G\to G$;
  \item \label{theorem3:3}$\{ x_ja_j\}$, for any non-zero $x_j\in F$;
  \item \label{theorem3:4}$ \{a_j\alpha_j\}$, for any $\alpha_j\in G$.
  \end{enumerate}
  In addition, the sequence $\{\widehat a_j\}$, see
  Eq.~(\ref{eq:hat-operator}), gives an AMC complex isomorphic to the
  original co-chain complex,
  $\amc(\{\widehat{a}_j\})\simeq \widetilde{\amc}(\{a_j\})$.
\end{statement}

Therefore, we can always ``rescale'' the elements in the original set,
and choose $a_{j}=1+\ldots$, $j\le D$.  Further, we can try to
simplify the polynomials for a $D$-dimensional AMC complex, e.g., by
choosing group generators in the support of each element, e.g.,
$a_j=1+x_j+(\text{other monomials})$.  Clearly, the second
transformation changes the group presentation and the basis vectors
associated with the lattice.

For example, with weight-two elements $a_i=1+x_i$, $1\le i\le D$,
we get a complex locally equivalent to $D$-dimensional QHP codes.
Assuming the entire group can be generated this
way\footnote{Otherwise, the spaces in an AMC complex can be decomposed
  onto a direct sum of subspaces corresponding to cosets of the
  subgroup $G_X=\langle \{x_j\}\rangle$ in the original group $G$; see
  Sec.\ IV-C in Ref.~\onlinecite{Lin-Pryadko-2023}. In the abelian
  case these complexes are permutation-equivalent to a complex in the
  subgroup $G_X\le G$.},
$$
G=\left\langle \{x_j\}_{j=1}^D\right\rangle,
$$
we get a $D$-complex locally equivalent to $D$-dimensional toric
codes.  With a quantum CSS code associated with level-$j$ subspace,
for errors with irreducible weight smaller than the injectivity radius
$r_{\rm inj}$, this guarantees identical syndrome weights, that is,
identical confinement
profiles\cite{Quintavalle-Vasmer-Roffe-Campbell-2021} up to
$r_{\rm inj}$.  We also expect the code distance to grow as
$d={\cal O}(r_{\rm inj}^\alpha)$, with the exponent
$\alpha=\min(j,D-j)$ expected for level-$j$ $D$-dimensional toric
codes, see Statement \ref{th:k-d-rates} below.  Further, we can use a
measuring circuit designed for a regular level-$j$ $D$-dimensional
toric code: its validity is guaranteed as a local
property.\footnote{Note, however, that the circuit distance need not
  always coincide with that of the original code.  As an example,
  while any single-ancilla measurement circuit would work for a toric
  code\cite{Manes-Claes-2025}, rotated surface codes require a
  carefully designed \nz\ addressing scheme for
  fault-tolerance\cite{Tomita-Svore-2014}.}

In practice, especially if we want to search for AMC chain complexes
constructed from group algebra elements with weights higher than two
and groups with two or more generators, it may be more efficient to go
over small-weight polynomials and suitable periodicity vectors.  For
example, such a method was used in
Ref.~\onlinecite{Liang-Liu-Song-Chen-2025} to construct a number of
weight-six locally-planar GB (also known as BB)
codes.

\subsection{Matrix ranks for AMC construction.}
While the ranks of the matrices can be computed numerically in each
particular case (e.g., with the help of Gauss elimination algorithm
over the field $F$, or by analyzing the polynomial equations directly
by constructing the corresponding Gr\"obner
bases\cite{Postema-Kokkelmans-2025}), we have analytically computed
the corresponding ranks in two cases.

First, consider a cyclic group of order $\ell$, with
the group algebra $F[C_\ell]$ isomorphic to the ring
$F[x]/(x^\ell-1)$ of modular polynomials.  Here the ranks of
the boundary operators in the $D$-complex $\amc(a_1,a_2,\ldots,a_D)$
constructed from single-variate polynomials $a_j\equiv a_j(x)$,
$1\le j\le D$, is expressed in terms of the
\emph{characteristic polynomial},
\begin{equation}
h(x)=\gcd(a_1,a_2,\ldots,a_D,x^\ell-1).\label{eq:characteristic-poly}
\end{equation}
In Appendix \ref{sec:proof-k-cyclic}, we prove the following statement:
\begin{statement}
  \label{th:k-cyclic}
  Consider a $D$-complex $\amc(\{a_j\}_{j=1}^D)$ over $F[C_\ell]$.
  The ranks of the boundary operators satisfy
  $\rank Q_j=\binom{D-1}{j-1}(\ell-\kappa)$, where $\kappa=\deg h(x)$
  is the degree of the characteristic polynomial
  (\ref{eq:characteristic-poly}).
\end{statement}
As an immediate consequence, the corresponding homology groups have
orders $\left| H_j({\cal Q})\right|=\binom{D}{j}\kappa$.  Previously,
the same or equivalent results were derived for the special case of
$D=2$ AMC codes (quasi-cyclic GB codes), see
Refs.~\cite{Kovalev-Pryadko-Hyperbicycle-2013,Panteleev-Kalachev-2019}. Also
note a similarity with the corresponding equations for $D$-dimensional
QHP codes.  The main difference with 
Eqs.~(\ref{eq:QHP-power-nj}) and (\ref{eq:QHP-power-kj_dj}) is the linear
scaling with the group order $\ell$, which is why the AMC construction
gives much smaller codes.

Second is the case of a semisimple group algebra $F[G]$ where,
according to Maschke's theorem, the group order $\ell=\left|G\right|$
is mutually prime with the field characteristic $p$, $\gcd(p,\ell)=1$
(see, e.g., Corollary 2.2.5 in
Ref.~\onlinecite{Drozd-Kirichenko-book-1994}).  In this case the
result is very similar to that in Statement \ref{th:k-cyclic}, except
the value of $\kappa$ is defined differently.  For a technical
definition of $\kappa$ in terms of products of idempotent matrices,
please see the proof given in Appendix \ref{sec:proof-k-semisimple}.
In practice, $\kappa$ can be defined, e.g., as the dimension of the
classical $F$-linear code with the check matrix ${Q}_D$ (all $D$
square blocks stacked together in a single column block).
\begin{statement}
  \label{th:k-semisimple}
  Consider a $D$-dimensional complex
  ${\cal Q}=\amc(a_1,a_2,\ldots,a_D)$ over a semi-simple abelian group
  algebra $F[G]$.  The ranks of all boundary operators are
  proportional to that of $Q_D$,
  $\rank Q_j=\binom{D-1}{j-1}\rank Q_D$.
\end{statement}
The same result can be also proven using a decomposition of the
semisimple abelian group $G$ into a product of semisimple cyclic
groups, see the Appendix in Ref.~\onlinecite{Panteleev-Kalachev-2020}.

\subsection{Distance bounds.}
First we construct lower (existence) bounds on homological distances
in AMC complexes.  They are based on the results for $D$-dimensional
QHP codes\cite{Tillich-Zemor-2009,Zeng-Pryadko-2018,Zeng-Pryadko-hprod-2020},
and the fact that the AMC construction includes QHP codes derived from
abelian group-algebra codes as a special case.  We prove in Appendix
\ref{sec:proof-k-d-rates}:
\begin{statement}
  \label{th:k-d-rates}
  Let $G$ be a finite abelian group presented with the list of
  generators $\bs x=(x_1,x_2,\ldots,x_m)$, and
  $K\equiv G^{\times D}$ its $D$-fold tensor product with itself, with
  the corresponding list of generators
  $\{\bs x_1, \bs x_2,\ldots,\bs x_D\}$.  Given a
  checkable\cite{Borello-delaCruz-Willems-2022} abelian group-algebra
  code in $F[G]$ of length $n$, dimension $k$, and distance $d$,
  with the orthogonal element $a\equiv a(\bs x)$, consider the group
  algebra elements $a_i\equiv a(\bs x_i)\in F[K]$, $1\le i\le D$, and
  the corresponding $D$-complex ${\cal Q}=\amc(a_1,a_2,\ldots, a_D)$.
  Its $j$-th subspace has dimension
  $n_j\equiv \dim({\cal Q}_j)=\binom{D}{j}n^D$, while the corresponding
  homology and co-homology groups have the same dimension
  $k_j=\binom{D}{j}k^D$ and homological distances $d_j=d^j$ and
  $\tilde{d}_j=d^{D-j}$, respectively.
\end{statement}
In the special case of a symmetric $D=2j$ AMC code at level $j$,
this gives the encoding rate $k_j/n_j=(k/n)^D$ and the distance
$d^{D/2}$.  Thus, a finite rate and a linear distance scaling in the
original code family would give a finite rate and the distance scaling
as a square root of the code length, similar to GB codes equivalent to
QHP codes.  While we are not aware of a proof that good linear abelian
group-algebra codes exist (known results include good code families
based on \emph{non-abelian} dihedral groups $D_{2m}$\cite{Fan-Lin-2021}, random
quasi-abelian codes \cite{Bazzi-Mitter-2006}, and non-linear cyclic
codes\cite{Haviv-Langberg-Schwartz-Yaakobi-2017} that attain the
Gilbert-Varshamov bound), already the classical results on cyclic
codes\cite{Lin-Weldon-1967,Berlekamp-Justesen-1974} are sufficient to
show the existence of AMC codes with finite relative distances and
power-law distance scaling, with the exponent arbitrarily close to
$1/2$.   

Second, we give an upper bound on the distance for any AMC code based
on a cyclic group.  The bound, formulated in terms of the
characteristic polynomial (\ref{eq:characteristic-poly}) is a
generalization of a similar result for GB codes\cite{Wang-Pryadko-2022}.
\begin{statement}
  \label{th:distance-h-upper}
  Take a $D$-complex ${\cal Q}=\amc(\{a_i\}_{i=1}^D)$ over $F[C_\ell]$
  with characteristic polynomial $h(x)$.  Let $d_h^\perp$ be the
  minimal distance of the $F$-linear cyclic code with
  the check polynomial $h(x)$.  The homological distances satisfy the
  inequality $d_j({\cal Q})\le d_D({\cal Q})=d_h^\perp$, $1\le j< D$.
\end{statement}
When boundary operators from a level-$j$ homology group are used to
construct a CSS code, this gives a uniform upper bound on the code
distances, $d\le d_h^\perp$. We also get a bound on the single-shot
distance\cite{Campbell-2018} $d_{\rm SS}\le d_h^\perp$, a
parameter which may be relevant when the code is operated in a
fault-tolerant setting.

Notice, however, that the single-shot distance $d_{\rm SS}$ is only
relevant if data and syndrome decoding are done separately, and that
only if we are using unmodified matrices from the chain complex, e.g.,
$M_X=Q_{j-1}$ and $M_Z=Q_{j+2}^T$ with $H_X=Q_j$ and $H_Z=Q_{j+1}^T$.
In practice, the single-shot distance may be increased by adding
homologically non-trivial chains (co-chains) as additional rows of
$M_X$ ($M_Z$).  The downside is that these vectors may have higher
weights, potentially increasing the complexity of the syndrome
decoding step.

Another parameter relevant in a fault-tolerant setting is the
\emph{syndrome distance} $d_{\rm S}$, the minimum weight of a non-zero
syndrome vector.  An obvious upper bound is the minimum column weight
in the check matrix.  A boundary operator $Q_j$ in an AMC $D$-complex
${\cal Q}$ has $j$ or $D+1-j$ non-zero blocks in each block column or
row, respectively.  If we use group algebra elements of the same
weight $\omega$, this gives uniform column weights $j \omega$ and row weights
$(D+1-j)\omega$.  For a code $\css(Q_j,Q_{j+1}^T)$ in the space
${\cal Q}_j$ of such a $D$-complex, this gives a simple upper
bound on the syndrome distance,
\begin{equation}
  d_{\rm S}\le \min(j,D-j)\omega.\label{eq:syndrome-d-upper}
\end{equation}
We also get uniform weights of the stabilizer generators defined by
the rows of $H_X=Q_j$ and $H_Z=Q_{j+1}^T$,
\begin{equation}
  w_X=(D+1-j)\omega,\quad w_Z=(j+1)\omega.\label{eq:row-wX-wZ}
\end{equation}
In particular, with $D=4$ and $j=2$, this gives an upper bound for the
syndrome distance $d_{\rm S}\le 2\omega$ and stabilizer generators of a
uniform weight $w_X=w_Z=3\omega$.

\section{Numerical results}
\label{sec:numerical}
\subsection{Family of 4-dimensional rotated toric codes}

We now focus on the family of quasicyclic CSS codes
$\amc_2(a_1,a_2,a_3,a_4)$ constructed from distinct weight-$2$ group
algebra elements $a_i\in F[C_\ell]$, $1\le i\le 4$.  As explained in
the previous section, such codes are locally equivalent to
$4$-dimensional toric codes with qubits on faces (second homology
group).  While the topology of the code is the same as for the regular
4D toric codes, the lattice periodicity vectors $\bs\Delta_i$ need
no longer be parallel to the Cartesian axes.  Thus, we may 
refer to these codes as \emph{rotated 4D toric codes}.

Specifically, we constructed such codes for $7\le \ell\le 30$ and
corresponding block lengths $n=6\ell$ (four distinct weight-$2$
polynomials cannot be found for $\ell<7$).  Generating polynomials
and parameters of smallest codes for each minimum distance $d$ are
listed in Tab.~\ref{tab:toric-2222}.  All constructed codes have the
same dimension $k=6$ and the same characteristic polynomial
$h(x)=1+x$.  For comparison, in Tab.~\ref{tab:toric-2222} we also list
parameters of two conventional 4D toric codes, which require larger
block lengths to achieve the same distances.

In addition to code parameters and the defining polynomials,
Tab.~\ref{tab:toric-2222} shows the confinement
profiles\cite{Quintavalle-Vasmer-Roffe-Campbell-2021} of the
constructed codes, computed using the {\tt dist\_m4ri} program
\cite{Pryadko-2025-distm4ri}.  The confinement profile $w_{\rm S}(x)$
is defined as the minimum syndrome weight for an error $e$ with
irreducible weight $x=1,2,\ldots$.  Here the irreducible weight
$x=w(e)$ is the minimum weight of an error equivalent to $e$.  For any
non-zero $x<d$, the confinement profile is bounded from below by the
syndrome distance, $w_{\rm S}(x)\ge d_{\rm S}$, while
$w_{\rm S}(0)=w_{\rm S}(d)=0$, where $d$ is the code distance.

As discussed in Sec.~\ref{sec:equiv}, AMC codes constructed from
polynomials of weight two are locally equivalent to higher-dimensional
toric codes. They have identical confinement profiles, up to the
injectivity radius.  In particular, for a $4D$ toric code with qubits
on the plaquettes, codewords are two-dimensional membranes; creation
of a (partial) membrane of radius $r$ and area $\mathcal{O}(r^2)$
requires perimeter (syndrome weight) linear in $r$, which implies a
square root scaling for the confinement profile,
$w_{\rm S}=\mathcal{O}(w^{1/2})$, for error weights
$w\le w_{\rm max}=\mathcal{O}(L^2)$.  For such a code with block size
$n=6L^4$, the code distance is $d=L^2$, while the energy barrier
(maximum of the confinement profile) scales as
$\mathcal{O}(L)=\mathcal{O}\!\left(n^{1/4}\right)$.

For a large $4D$ toric code, the first five terms in the confinement
profile $w_{\rm S}(x)$ are $4,6,6,8,8$.  Only the first couple of
terms of the confinement profiles of the larger AMC codes in
Tab.~\ref{tab:toric-2222} match this sequence, consistent with the
fact that injectivity radii for these codes are small.  For example,
for the code with $\ell=30$, if we follow the prescription in
Sec.~\ref{sec:equiv} and define the generators $x_1=x^2$, $x_2=x^5$,
$x_3=x^8$, and $x_4=x^9$, we get a periodicity vector of length
$2 r_{\rm inj}=6^{1/2}<3$, as follows from the relation
$$x_1^2 x_2x_4^{-1}\equiv (x^2)^2x^5x^{-9}=1
$$ between thus defined generators.  Nevertheless, the
confinement profiles for the codes in Tab.~\ref{tab:toric-2222} are
higher than, e.g., for $3+3$ GB codes with syndrome distance
$d_{\rm S}=3$ previously studied in
Ref.~\onlinecite{Lin-Liu-Lim-Pryadko-circuits-2024}, where the
confinement profiles necessarily start with $3,4,3$, increasing only
for irreducible error weights $x\le 2$.

\begin{table}[htbp]
  \centering
  \begin{tabular}[c]{c||c|c|c|c||c|c|c|c||l}
    $\ell$&$n$&$k$&$d$&$d_{\rm S}$&$a_1$&$a_2$&$a_3$&$a_4$&confin
    \\ \hline
    7 & 42&6& 4 &4 &$1+x$&$1+x^2$& $1+x^3$ &$1+x^4$&4,4,4,6      \\
    10& 60&6& 5 &4 &$1+x$&$1+x^2$& $1+x^3$ &$1+x^4$&4,6,6,6,4     \\
    11& 66&6& 6 &4 &$1+x$&$1+x^2$& $1+x^3$ &$1+x^4$&4,6,6,6,4     \\
    14& 84&6& 7 &4 &$1+x$&$1+x^2$& $1+x^5$ &$1+x^6$&4,6,6,6,4     \\
    16& 96&6& 8 &4 &$1+x$&$1+x^3$& $1+x^5$ &$1+x^7$&4,6,8,8,4     \\
    18&108&6& 9 &4 &$1+x$&$1+x^3$& $1+x^5$ &$1+x^7$&4,6,8,8,4     \\
    25&150&6& 10&4 &$1+x$&$1+x^4$& $1+x^6$ &$1+x^9$&4,6,8,8,4     \\
    28&168&6& 11&4 &$1+x$&$1+x^3$&$1+x^7$&$1+x^{12}$& 4,6,8,8,4   \\
    30&180&6& 12&4&$1+x^2$&$1+x^5$& $1+x^8$ &$1+x^9$& 4,6,8,8,4   \\ \hline 
    $2^4$&96&6 &4&4&$1+x_1$&$1+x_2$&$1+x_3$&$1+x_4$ &4,4,4  \\
    $3^4$&486&6&9&4&$1+x_1$&$1+x_2$&$1+x_3$&$1+x_4$&4,6,6,8,4 
  \end{tabular}
  \caption{Smallest $D=4$ AMC$_2$ qubit codes (rotated 4D toric
    codes) with increasing distances $d$, constructed from cyclic
    groups $C_\ell$ with $\ell\le 30$ (above the line) and two
    conventional 4D toric codes (below the line).  Confinement
    profile \cite{Quintavalle-Vasmer-Roffe-Campbell-2021} (``confin'')
    column lists the minimum syndrome weights for irreducible errors
    of weights from $1$ to $5$, except for codes with $\ell=7$ and
    $\ell=2^4$ where the biggest irreducible errors have weights $4$
    and $3$, respectively.}
  \label{tab:toric-2222}
\end{table}

\subsection{Circuit design}
Stabilizer generator matrices of these codes, $H_X=R_2$ and
$H_Z=R_3^T$ are shown in Eqs.~(\ref{eq:mbc4-2}) and (\ref{eq:mbc4-3}),
where the blocks $A$, $B$, $C$, and $D$ should be replaced with the
corresponding circulant matrices.  With weight-two polynomials, $\omega=2$, this
gives stabilizer generators of uniform weight $w_X=w_Z=6$.  Each data qubit
(associated with a plaquette on a hypercubic lattice) is neighboring
with four $X$ checks (edges) and four $Z$ checks (cubes), total of $8$
checks.

We have set out to design quantum codes with increased redundancy of
low-weight stabilizer generators.  Evidently, constructed matrices
have \emph{too much redundancy} to measure all checks in a circuit of
optimal duration.  For this reason, in each round of measurement we
chose to address only the stabilizer generators corresponding to three
chosen row blocks, labeling the corresponding circuits by the index of
the dropped row block.

As an example, after removing the first and the last row blocks,
respectively, from the CSS matrices $H_X=R_2$ and $H_Z=R_3^T$ in
Eqs.~(\ref{eq:mbc4-2}) and (\ref{eq:mbc4-3}), we get in the case of a
binary field,
\begin{eqnarray}
  H_X^{(1)}&=&\left[
               \begin{array}[c]{ccc|ccc}
                 D_1&&&B_4&A_5\\ &D_2&&C_4&&A_6\\ &&D_3&&C_5&B_6
               \end{array}
                                                              \right],
  \label{eq:order-mat-Hx}\\
H_Z^{(1)}&=&\left[
  \begin{array}[c]{ccc|ccc}
    \widehat B_1&\widehat C_2&&\widehat D_4\\
    \widehat A_1&&\widehat C_3\,&&\widehat D_5\\
    &\widehat A_2&\widehat B_3\,&&&\widehat D_6\!
  \end{array}
\right],\;\,  \label{eq:order-mat-Hz}
\end{eqnarray}
where the subscripts next to block names label the columns and the
circumflex ``hat'' accent indicates transposed blocks.  Focusing on
bigger $3\times 3$ blocks, we have a two-block structure similar to
that in lifted-product or GB/2BGA codes, with two and four ("$2+4$")
non-zero entries in each subrow.  The circuits
designed\cite{Lin-Liu-Lim-Pryadko-circuits-2024} for $2+4$ 2BGA/GB
codes, a generalization of the \nz\ addressing
scheme\cite{Tomita-Svore-2014} for rotated surface codes, also work in
the present case.  Namely, a valid time-optimal circuit is obtained if
we address the qubits corresponding to the two terms in the $D$ blocks
at steps 1 and 6 (first and last), with the opposite order for $H_X$
and $H_Z$, while addressing the qubits from the remaining blocks in
the intervening time steps.  Specifically, we used the following
addressing scheme:
\begin{eqnarray*}
  \begin{array}[c]{l|ccc|ccc|}
    H_X&D_{10}&B_{40}&B_{41}&A_{50}&A_{51}&D_{11} \\
      &D_{20}&A_{60}&A_{61}&C_{40}&C_{41}&D_{21} \\
      &D_{30}&C_{50}&C_{51}&B_{60}&B_{61}&D_{31} \\[0.15em] \hline
    H_Z^{\strut}&\widehat D_{41}&\widehat B_{10}&\widehat B_{11}&\widehat C_{20}&\widehat C_{21}&\widehat D_{40}\\
      &\widehat D_{51}&\widehat C_{30}&\widehat C_{31}&\widehat A_{10}&\widehat A_{11}&\widehat D_{50} \\
      &\widehat D_{61}&\widehat A_{20}&\widehat A_{21}&\widehat B_{30}&\widehat B_{31}&\widehat D_{60}
  \end{array}
\end{eqnarray*}
Here each column corresponds to a circuit time step, and each row to a
block row of the matrices (\ref{eq:order-mat-Hx}),
(\ref{eq:order-mat-Hz}), with secondary subscripts labeling the
location corresponding to a free term (``0'') and the other monomial
(``1'').  For example, the 1st row indicates that for measuring
stabilizer generators defined by the first row block of
Eq.~(\ref{eq:order-mat-Hx}), one should first address the qubit
corresponding to the free term (second subscript 0) of the polynomial
defining the $D_1$ block, then the free term (second subscript 0) of
the $B_4$ block, then the other monomial (second subscript 1) of the
same block, etc.  Circuits corresponding to the original matrices with
2nd, 3rd, or 4th row removed were constructed identically, with the
help of column permutations rendering the matrices in the form
(\ref{eq:order-mat-Hx}), (\ref{eq:order-mat-Hz}), up to block labels.

Based on this addressing scheme for a single measurement round, we
designed three types of four-round syndrome measurement circuits, {\tt
  1111}, {\tt 1212}, and {\tt 1234}\footnote{To guarantee rank
  preservation after removal of a block row (which is important for
  the {\tt 1111} cycle), guided by Lemma~\ref{th:lemma-A-factor}, the
  polynomials were ordered so that $a_4=1+x$ [which gives block $D$ in
  Eqs.~(\ref{eq:mbc4-2}) and (\ref{eq:mbc4-3})].}.  Here a digit
stands for the row omitted in each particular round of measurement,
and we constructed detector events matching pairs of closest available
measurement outcomes, either from neighboring or
next-nearest-neighboring rounds.  Overall, all our circuits have fixed
duration of $T_{\rm max}=9$ measurement rounds, including two
four-round cycles and a final measurement of all data qubits in $Z$ or
$X$ basis, depending on the initial state of the data qubits,
$\ket 0^{\otimes n}$ or $\ket+^{\otimes n}$.  The results of the final
measurement are combined to construct the final set of detector events
(thus computed syndrome bits are exact), and also the logical
observables which are used to verify whether the decoding was successful.

\subsection{Circuit simulation results}
Thus designed measurement protocols have been implemented as {\tt
  Stim} circuits\cite{Gidney-2021-stim}, using {\tt CX} and {\tt XCX}
gates, with a standard circuit error model.  Namely, we included
probability-$p$ single-qubit depolarizing noise on the data qubits at
the start of each round, two-qubit depolarizing noise with probability
$p$ after every two-qubit gate, and also bit-flip (phase-flip) Pauli
error channel with the same probability $p$ before and after each
$Z$-basis ($X$-basis) measure-reset (MR) gate.  There are no
single-qubit gates in our circuits, and we chose to omit errors for
idling intervals due to the fact that in most platforms single-qubit
Clifford gates and idle intervals have error rates small compared to
those of two-qubit entangling gates or single-qubit MR operations.

For accurate decoding, one needs to know probabilities of individual
faults in the circuit and their effects on the detectors and the
logical observables.  This information for a given circuit is computed
automatically by {\tt Stim}, which also aggregates the faults
producing the identical effect, and can be exported as a detector
error model (DEM) file\cite{Gidney-2021-stim}.  Information about
aggregated faults in a DEM file can be regarded as columns of
stabilizer-generator $H_X$ and logical $L_X$ matrices (with known
aggregated fault probability for each column) for an asymmetric
quantum CSS code.  In particular, the distance $d_Z$ of this code is
identified as the circuit distance.  We used {\tt vecdec} program
\cite{Pryadko-2025-vecdec} to export the matrices from the DEM files
associated with individual circuits, and {\tt dist\_m4ri} program to
compute the corresponding distances (with the distances $d\le 6$
computed exactly with a greedy enumeration algorithm, and upper bounds
for larger distances using a random information set
algorithm\cite{Dumer-Kovalev-Pryadko-IEEE-2017}, exact with a high
probability).

Circuit simulations have been done with Pauli-frame circuit simulator
{\tt Stim} \cite{Gidney-2021-stim} written by Craig Gidney, and the
decoding using {\tt vecdec} program \cite{Pryadko-2025-vecdec}.
Specifically, before attempting belief-propagation (BP) decoding, the
program tries to match the syndrome data against a pre-computed list
of small-weight error clusters; such a simple pre-decoder gives a
substantial acceleration, especially at smaller error probabilities.
If pre-decoding fails, the corresponding clusters are discarded, and a
BP decoder with serial schedule, using both average and instantaneous
logarithmic likelihood ratios (whichever converges first) is engaged,
with up to 50 steps.  Finally, if that fails, the logarithmic
likelihood ratios (LLRs) are transmitted to the OSD-1 decoder.  A
sample of circuit simulation data for codes $[[42,6,4]]$, $[[66,6,6]]$,
and $[[96,6,8]]$ (see Tab.~\ref{tab:toric-2222}), with the standard
error model described earlier and the ``1212'' syndrome-addressing
cycle, is shown on a threshold-style plot in Fig.~\ref{fig:threshold}.
To speed up the decoding, we simplified the DEM by dropping all
``minority'' detector events, which somewhat increases logical error
rates\footnote{This is unlike with the {\tt BP+OSD}
  package\cite{roffe_decoding_2020,Roffe_LDPC_Python_tools_2022} whose
  performance may actually be degraded by additional detector events,
  see in Appendix A of Ref.~\onlinecite{Beni-Higgott-Shutty-2025}. In
  comparison, for the decoder used here, removing minority detector
  events increases logical error rates, e.g., from $0.20\%$ to
  $0.36\%$ at $p=0.1\%$ with the $[[42,6,4]]$ code and the ``1212''
  circuits.}.  Nevertheless, asymptotic slopes at small $p$ consistent
with $d/2$ (where $d$ is the code distance) are in agreement with the
circuit distances computed independently.  The (pseudo)threshold
position (crossing point) is at $p_c\approx 1.1\%$, in approximate
agreement with the circuit simulation results for similar codes in
Ref.~\onlinecite{Aasen-etal-Svore-2025}.  We note that our value is
(marginally) higher than the values for toric and surface codes under
the same error models.

\begin{figure}[htbp]
  \centering
  \includegraphics[width=\columnwidth]{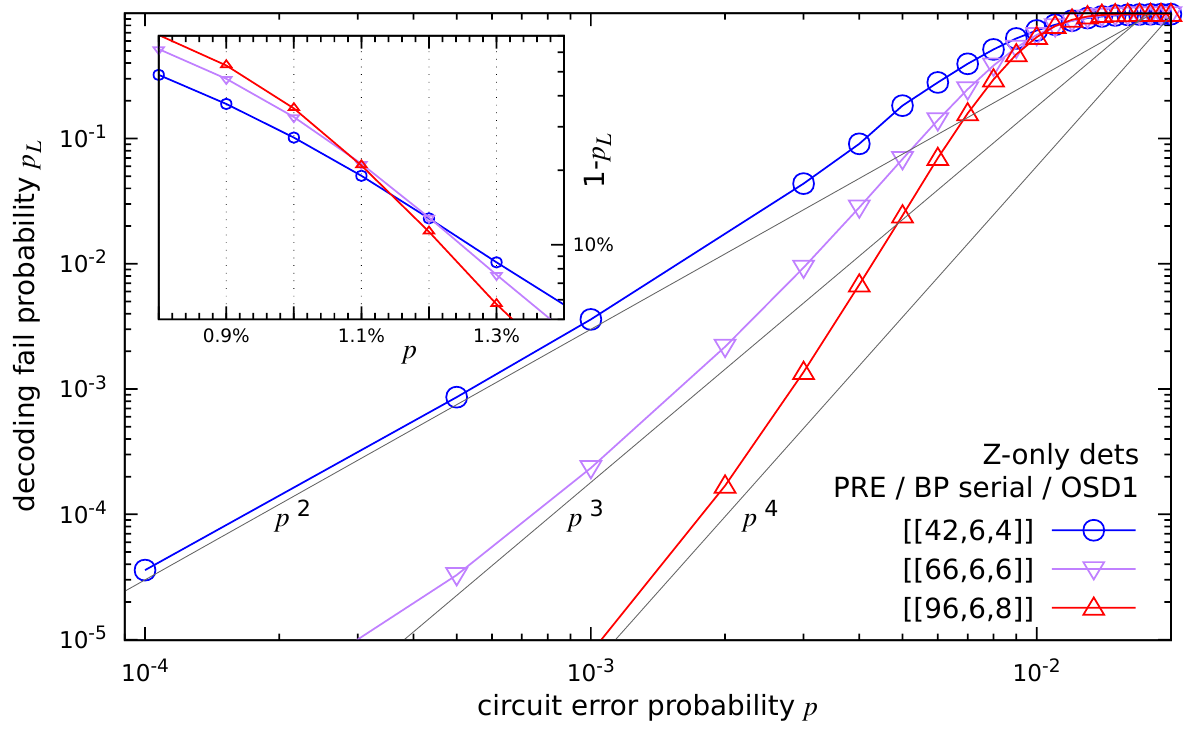}
  \caption{Logical error rate as a function of the circuit noise
    parameter $p$, using the syndrome-addressing scheme ``{\tt 1212}''
    and $Z$-basis final measurements.  For faster simulations, only
    $Z$-measurement detector events have been included.  Results for
    short even-distance codes $[[42,6,4]]$, $[66,6,6]]$, and
    $[[96,6,8]]$ (see Tab.~\ref{tab:toric-2222}) are shown.  Inset
    shows the success probability $1-p_L$ as a function of $p$ in the
    vicinity of the crossing point(s) at $p\approx 1.1\%$}
  \label{fig:threshold}
\end{figure}

In addition, to illustrate the ``single-shot'' properties of the
constructed codes, we have tried sequential one-step SW
decoding\cite{Breuckmann-Londe-2020,%
  Grospellier-Groues-Krishna-Leverrier-2020,Higgott-Breuckmann-2023},
with windows of different sizes.  Details of our implementation are
given in Ref.~\onlinecite{Lin-Liu-Lim-Pryadko-circuits-2024}.  Some
simulation results for the same three codes as in
Fig.~\ref{fig:threshold} are shown in Fig.~\ref{fig:single-shot},
where the logical error rates $p_L$ are plotted as a function of the
decoding window size $T$ at a fixed circuit error rate, $p=10^{-3}$.
Up to $5\times 10^6$ samples have been generated for each point,
resulting in very small statistical errors.

For the data in Fig.~\ref{fig:single-shot} we used full DEMs for
decoding, including the detector events associated with both $X$- and
$Z$-measurements, which resulted in somewhat lower $p_L$ at full-block
decoding ($T=9$) than those in Fig.~\ref{fig:threshold} at the same
$p$.  For each code, we tried all three syndrome measurement cycles.
Simplest is the ``{\tt 1111}'' cycle, where the same three row blocks
are measured in each round.  While discarding most of the redundant
generators, this cycle is the most accurate with full-block decoding.
In comparison, the ``{\tt 1234}'' cycle which addresses most uniformly
all low-weight stabilizer generators, gives the lowest fail rates with
single-shot ($T=1$) and two-shot ($T=2$) decoding protocols.  Note
that only during the rounds labeled with ``1'' a full-rank set of
stabilizer generators is measured.

\begin{figure}[htbp]
  \centering
  \includegraphics[width=\columnwidth]{amc_T_v2}
  \caption{Logical error rates computed with full detector error
    models (both $X$ and $Z$ detector events) for the same codes as in
    Fig.~\ref{fig:threshold}, plotted as a function of the decoding
    window size $T$.  Shading indicates the statistical error at one
    standard deviation.  Here $T=T_{\rm max}=9$, the total number of
    measurement rounds in our circuits, corresponds to (most accurate)
    full-block decoding, while $T<T_{\rm max}$ are the number of
    rounds used in SW decoding protocols.  The data labeled as ``{\tt
      1111}'', ``{\tt 1212}'', and ``{\tt 1234}'' correspond to
    different syndrome measurement cycles as explained in the text.
    At larger $T$, syndrome measurement errors are less important, and
    the ``{\tt 1111}'' cycle gives lower error rates due to shorter
    measurement cycle.  In contrast, with single- ($T=1$) and two-shot
    ($T=2$) decoding, the error rates are lower with the other two
    cycles which better utilize the syndrome redundancy.  In
    comparison, in the ``1111'' measurement cycle, most of the
    redundant checks are discarded.  The difference is substantial for
    the code $[[96,6,8]]$ with a larger minimal distance $d=8$, see a
    discussion in Sec.~\ref{sec:disc-sim}.}
  \label{fig:single-shot}
\end{figure}

\subsection{Discussion of the simulation results}
\label{sec:disc-sim}

First, we note that LDPC codes, quantum or classical, have Tanner
graphs of bounded degree.  Consequently, a high-weight stochastic
error—even one exceeding the code distance—can be decomposed into a
collection of smaller, localized clusters. These clusters can then be
decoded independently \cite{Kovalev-Pryadko-FT-2013,
  Dumer-Kovalev-Pryadko-bnd-2015}. Fault-tolerant circuit detector
error models (DEMs) for LDPC codes inherit this same property of
locality.

For such a code or circuit with distance $d$ and a sufficiently small
error probability $p$, the logical error probability under
minimum-weight decoding scales as
$$
P_{\rm fail}=AN\,{\textstyle\binom{d}{t+1}}\,p^{t+1}.
$$
Here $A$ is an order-one prefactor, $t \equiv \lfloor (d-1)/2 \rfloor$
is the number of corrected errors, and the binomial factor represents
the number of failing configurations within a codeword.  $N$ denotes
the number of irreducible codewords where a fault of $t+1$ errors can
cause a decoding failure.  Specifically: when $d=2t+2$ (even), $N$ is
the number of minimum-weight codewords; when $d=2t+1$ (odd), $N$ can
be estimated as the number of codewords with weights $d$ and
$d+1$. For precise definitions of these terms, see
Ref.~\onlinecite{Dumer-Kovalev-Pryadko-bnd-2015}.

A sliding window decoder is inherently suboptimal compared to a
full-circuit decoder, and its decoding accuracy worsens with the
decreasing window size $T$.  Depending on the code at hand, such a
decoder may fail for certain errors of weight $t'+1<t+1$, leading to
an asymptotically higher logical error rate
$P_{\rm fail}=\mathcal{O}(p^{t'+1})$.  Alternatively, it may fail only
on additional error configurations of the same weight $t+1$; this
results in a similar scaling at small $p$ but with an increased
prefactor $N'>N$. For surface codes, a window size of $T\ge d$ is
sufficient to achieve optimal decoding accuracy\cite{Stephens-2014,
  Das-Locharia-Jones-2022}.

One mechanism that can cause a dramatic increase in the prefactor $N$
(at fixed $t$) for small window sizes is the LDPC-breakdown transition
\cite{Lin-Liu-Lim-Pryadko-circuits-2024}. Consider an error $e$ with
irreducible weight $w\equiv \wgt(e)$ and syndrome weight $w_s$ (the
number of detectors flipped in a single measurement round). A one-step
sliding window (SW) decoder with window size $T$, using minimum-weight
decoding, will misinterpret the entire error $e$ as a sequence of
measurement errors if
\begin{equation}
  \label{eq:LDPC-breakdown}
  w_s T <w.
\end{equation}
Such an error is perpetuated until the final round of measurement
where it is corrected.  While this effect alone does not reduce the
circuit distance, it creates a vulnerability.  In codes with a flat
energy profile, LDPC breakdown allows a chain of additional,
infrequent weight-one errors to eventually trigger a decoding
failure. This increase in the number of failing error configurations
is entirely entropic. In contrast, with larger $T$, a decoding failure
typically requires a weight-$(t+1)$ error to be localized within a
single measurement cycle. Notably, a high energy barrier or a high
confinement profile suppresses this process, as additional data errors
that increase $w$ may eventually violate condition
(\ref{eq:LDPC-breakdown}), allowing the decoder to correctly identify
and resolve the error.

SW decoding with a smaller $T$ can also be viewed as a variant of
local decoding.  The most well-understood analogue is
finite-temperature self-correction in a Hamiltonian model
corresponding to a code, which is quantified by the code's {energy
  profile} and {energy
  barrier}\cite{Bombin-Chhajlany-Horodecki-MartinDelgado-2013}.  For a
surface code the energy profile is flat, $w_s(x)=1$.  Consequently,
once a weight-one error is created, additional errors can be added
without penalty, effectively diffusing the endpoint of a string-like
fault through the system and rendering such codes unstable at finite
temperatures.  In contrast, 4D toric codes exhibit self-correction.
Their energy profile is lower-bounded by a confinement profile scaling
as $w_{\rm S}(x)=\mathcal{O}(x^{1/2})$, with an energy barrier scaling
as $\mathcal{O}(L)=\mathcal{O}\!\left(n^{1/4}\right)$.  As a result,
local decoders are efficient for such codes\cite{Hastings-hyp-2013}.
For 4D AMC codes---such as those constructed from weight-$2$
polynomials---we expect properties similar to 4D toric codes (up to
the injectivity radius) when all minimum-weight stabilizer generators
are measured. Furthermore, single-shot decoding and the associated
sustained thresholds can be analyzed using the techniques proposed in
Ref.~\onlinecite{Quintavalle-Vasmer-Roffe-Campbell-2021}.

In practice, we do not have a time-optimal scheme to measure the
entire set of $4\ell$ minimum-weight stabilizer generators at once.
Instead, each measurement round only accesses generators in three row
blocks of CSS matrices.  When we address the same three row blocks in
every measurement round (the data labeled ``{\tt 1111}'' in
Fig.~\ref{fig:single-shot}), we get a code with CSS matrices
(\ref{eq:order-mat-Hx}), (\ref{eq:order-mat-Hz}), with the same energy
profile as for the toric code, $w_{\rm S}(x)=2$, due to the overall
parity in the $D$ blocks.  Conversely, for the ``{\tt 1212}'' and
``{\tt 1234}'' circuits, some of the measurements are only verified
after skipping a cycle.  For these circuits, we expect the confinement
profiles in Tab.~\ref{tab:toric-2222} to become fully relevant only
for SW decoding with $T\ge2$.

How do we interpret the results in Fig.~\ref{fig:single-shot}?  For
the circuits used, we have verified that our SW decoder with $T\ge2$
preserves the code distance, i.e., it corrects any $t=1$, $2$, and $3$
circuit errors for the three codes with distances $d=4$, $6$, and $8$,
respectively.  Evidently, for all $d=4$ circuits, SW decoding with
$T=2$ is sufficient to reach nearly the same accuracy as for $T=9$
full-block decoding.  For the $d=6$ circuits, $T=3$ gives results
nearly identical to $T=9$, with slightly lower logical error rates
observed for the shorter measurement cycle labeled ``{\tt 1111}''.
The decoding failure probability increases as window size is decreased
to $T=2$, and here it is highest for the ``{\tt 1111}'' cycle---a
result consistent with the expectation that the other two cycles offer
better stability against measurement errors.  For the $d=8$ circuits,
we see an even more pronounced increase of the logical error rates for
$T=2$ compared to $T=3$ SW decoding (under the ``{\tt 1111}'' cycle),
as well as compared to the $T=2$ data for the other two cycles.

We notice that for the ``{\tt 1111}'' circuit, characterized by a flat
effective confinement profile $w_{\rm S}(x)=2$, for a window size
$T=2$, Eq.~(\ref{eq:LDPC-breakdown}) predicts a potential
LDPC-breakdown transition only for errors of weight $w>4$.  A tie
occurs at $w=4$, which may be broken by our decoder if those specific
syndrome errors are statistically more likely.  It is apparent from
Fig.~\ref{fig:single-shot} that this effect is sufficient to increase
the prefactor $N$ by a factor of two or three.

Clearly, while one- or two-shot decoding may function for the codes
examined, it may not yield optimal error rates as the code distance
increases.  Larger decoding window sizes $T(x)$ may be necessary to
ensure that the LDPC-breakdown transition does not occur for errors of
weight $x$.  For a power-law confinement profile
$w_{\rm S}(x)=\mathcal{O}( x^\alpha)$, Eq.~(\ref{eq:LDPC-breakdown})
indicates that $T(x)=\mathcal{O}(x^{1-\alpha})$ is sufficient to
prevent the breakdown transition.  For larger 4D toric codes and
locally equivalent AMC codes with the confinement exponent
$\alpha=1/2$, choosing $T=\mathcal{O}(d^{1/2})$ should be sufficient
to fully suppress LDPC breakdown.

This argument also implies that linear confinement, $\alpha=1$, would be
sufficient to ensure that large codes within a family achieve their
design error rates with a constant window size, $T={\cal O}(1)$.  This
result aligns with the claim that linear confinement guarantees
single-shot fault-tolerant decoding in the presence of stochastic
errors\cite{Quintavalle-Vasmer-Roffe-Campbell-2021}.

\section{Conclusion}

In conclusion, we constructed a family of quantum error-correcting
codes whose main advantage is a highly redundant set of minimum-weight
stabilizer generators, with relatively short block lengths.
Just like the higher-dimensional QHP codes which our codes generalize,
the construction defines a sequence of chain complexes of varying
lengths.  The constructed $D$-dimensional codes are highly symmetric
and they are characterized by confinement profiles and single-shot
properties similar to those of the $D$-dimensional QHP codes.

It is instructive to compare the parameters of $D$-dimensional AMC
codes over $F[C_\ell]$ with a characteristic polynomial $h(x)$, and
$D$-dimensional QHP codes constructed from a linear cyclic code with
parameters $[\ell,\kappa,\delta]$ and the same check polynomial
$h(x)$.  At level $j$, the QHP code has block length
$N=\binom{D}{j}\ell^D$, dimension $K=\binom{D}{j}\kappa^D$, and
minimal distance $\min(\delta^j,\delta^{D-j})$, i.e., its rate
scales as a power of that of the original code, $R=(\kappa/\ell)^D$.  In
comparison, the block lengths and the dimensions of AMC codes scale
linearly with $\ell$, $n=\binom{D}{j}\ell$ and $k=\binom{D}{j}\kappa$,
so that the rate does not change, which is also a feature of GB codes.

While there is no explicit expression for the distance, we can use the
codes in Tab.~\ref{tab:toric-2222} to compare the actual distances
with the corresponding upper bounds from Statement
\ref{th:distance-h-upper}, which for these codes are given by the
block size, $\ell$.  For $d\le 9$, we have a pretty good linear
scaling, $d\ge \ell/2$, while for the remaining three codes,
$d>\ell/3$.  (Unfortunately, we do not expect the distances to remain
as high for larger codes.)

The most important property of the proposed construction is the high
redundancy of its minimum-weight stabilizer generators.  For a
quasi-abelian two-block code of dimension $k=2\kappa$, there are
exactly $\kappa$ minimum-weight redundant generators of each type ($X$
and $Z$). In contrast, the stabilizer generator matrices of a
symmetric AMC code with even $D=2j$ have ranks
$\binom{2j-1}{j-1}(\ell-\kappa)$ and possess $\binom{2j}{j-1}\ell$
rows.  This provides a redundancy of over $25\%$ for $D=4$ and over
$33\%$ for $D=6$.  Such redundancy can be leveraged to accelerate
decoding and improve accuracy under realistic circuit noise.  This is
illustrated by our simulations of relatively short AMC codes, which
demonstrate two-shot decoding comparable in accuracy to full-block
decoding.  We also find a pseudothreshold above $1\%$---exceeding
those of surface codes (with weight-$4$ generators) and GB codes with
weight-$6$ generators.  Further studies are required to establish the
long-term performance of the constructed codes; in particular, how
sustained error rates depend on the decoding window size $T$.

\section*{Data availability}

The data that support the findings of this study are available from
the authors upon reasonable request.

\begin{acknowledgments} This work was supported in part by the APS
  M. Hildred Blewett Fellowship (HKL), and the NSF awards 2112848
  (LPP) and OIA-2044049 (AAK).
\end{acknowledgments}

\appendix
\section{Proof of Statement \ref{th:k-cyclic}}
\label{sec:proof-k-cyclic}

The calculation is based on the fact that the group algebra
$F[C_\ell]$, isomorphic to the ring $R\equiv F[x]/(x^\ell-1)$ of modular
polynomials, is a gcd domain.  That is, for any two polynomials
$a(x),c(x)\in F[x]/(x^\ell-1)$, there exists a polynomial
$h_{ac}(x)\equiv\gcd(a,c)\in R$, the greatest common divisor of the
two original polynomials.  Moreover, there also exist B\'ezout
coefficients $u(x),v(x)\in F[x]/(x^\ell-1)$ such that
\begin{equation}
  u(x)a(x)+v(x)c(x)=h_{ac}(x)\equiv \gcd(a,c).
  \label{eq:Bezout}
\end{equation}
Equivalently, if we introduce the rescaled polynomials
$a'(x)=a(x)/h_{ac}(x)$, $c'(x)=c(x)/h_{ac}(x)$, this gives
$u(x)a'(x)+v(x)c'(x)=1$.  Rescaled polynomials and the B\'ezout
coefficients can be used to construct unimodular (and thus invertible) matrices,
$$
M_{\pm}(x)=\left(
  \begin{array}[c]{cc}
    a'(x)& \pm c'(x)\\ \mp v(x)&u(x)
  \end{array}
\right),\quad \det M_{\pm}=1.
$$
The corresponding two-block matrices are also invertible,
$$
\hat W_{\pm}=\left(
  \begin{array}[c]{cc}
    A'& \pm C'\\ \mp V&U
  \end{array}
\right),\quad \hat W_{\pm}^{-1} =\left(
  \begin{array}[c]{cc}
    U& \mp C'\\ \pm V&A'
  \end{array}
\right),
$$
where $A'=\MM(a')$, $C'=\MM(c')$, etc.\ are $\ell\times \ell$
circulant matrices constructed from the corresponding polynomials; we
have $AC'-CA'=0$ and $UA+VC=H_{ac}\equiv \MM(h_{ac})$.

Our approach is to construct a sequence of row and column
transformations diagonalizing the boundary operator matrices in an AMC
complex.  First, demonstrate the steps on the first two matrices in
the 3-complex ${\cal R}=\amc(a,b,c)$, see Eqs.~(\ref{eq:mbc3-1}) and
(\ref{eq:mbc3-2}).  Here, we want to eliminate matrices $C=\MM(c)$,
while replacing all matrices $A$ with $H_{ac}=\MM(h_{ac})$.  To this
end, denote $W^{\pm}_{i,j}$ the unimodular block-diagonal matrices
constructed from an identity matrix by replacing blocks in positions
$\{(i,i), (i,j), (j,i), (j,j)\}$ with the blocks of the corresponding
matrices $\hat W_{\pm}$ (row-first order), $\overline{W}^{\pm}_{i,j}$
the corresponding inverse matrices, and $W^{\pm}_{i,j}$,
$\overline{W}^{\pm}_{i,j}$ the same matrices with additional
sign-flip.  Explicitly,
\begin{eqnarray*}
  Q_1'&=&Q_1\overline{W}^+_{31}=[C,B,A] \left[
          \begin{array}[c]{ccc}
            A'&&V\\&I\\ -C'&&U
          \end{array}\right]  =[0,B,H_{ac}],\\
  Q_2'&=&{W}^+_{3,1}Q_2 \overline{W}^-_{3,1}\\
  &=&\left[
          \begin{array}[c]{ccc}
            +U&&-V\\&I\\ +C'&&+A'
          \end{array}\right]\left[
  \begin{array}[c]{ccc}
    B&A&0\\ -C&0 &A\\ 0& -C& -B
  \end{array}
                             \right] \overline{W}^-_{3,1}\\
      &=& \left[
  \begin{array}[c]{cc|c}
   UB&H_{ac}&VB\\ \hline -C&0 &A\\ C'B& 0&- A'B
  \end{array}
  \right] \left[
          \begin{array}[c]{ccc}
            A'&&-V\\&I\\ C'&&U
          \end{array}\right]\\
  &=&\left[
  \begin{array}[c]{cc|c}
   B&H_{ac}&0\\ \hline 0&0 &H_{ac}\\ 0& 0& -B
  \end{array}
  \right] .
\end{eqnarray*}
The matrices are now reduced to a block-diagonal form, with the blocks
being the matrices from the 2-dimensional complex $\amc(h_{ac},b)$,
which, with the help of the identity
$$\gcd(h_{ac},b,x^\ell-1)=\gcd(a,b,c,x^\ell-1)\equiv h(x),$$
immediately gives
\begin{eqnarray*}
  \rank Q_1&=&\rank Q_1'=\ell-\deg h=\ell-\kappa,\\
  \rank Q_2&=&\rank Q_2'=2(\ell-\deg h)= 2(\ell-\kappa).
\end{eqnarray*}
The elimination process goes the same way for larger-$D$ complexes.
Due to symmetry, whenever a pair of matrices, say, $A$ and $B$, are
located in positions $(x_1,y_1)$, $(x_1,y_2)$ (same column) and in
$(x_2,y_3)$, $(x_3,y_3)$ (same row), the corresponding $3\times3$
block (rows $\{y_1,y_2,y_3\}$, columns $\{x_1,x_2,x_3\}$) will
necessarily have the form of $Q_2$ in Eq.~(\ref{eq:mbc3-2}), with only
one additional block (e.g., $C$), up to some signs and the block
labels.

We illustrate the elimination process on the example of the first two
matrices from the 4-complex $\amc(a,b,c,d)$ in Eqs.~(\ref{eq:mbc4-1}),
(\ref{eq:mbc4-2}).  Here $ua+vd=h_{ad}\equiv \gcd(a,d,x^\ell-1)$,
$a'=a/h_{ad}$, $d'=d/h_{ad}$, and capital letters correspond to block
matrices, $A'=\MM(a')$, $U=\MM(u)$, etc.
\begin{widetext}
{\small
\begin{eqnarray*}
  R_1'&=&R_1\overline{W}^+_{4,1}=[D,C,B,A]
          \left[
          \begin{array}[c]{cccc}
            A'&&&V\\&I\\ &&I\\-D'&&&U
          \end{array}\right]  =[0,C,B,H_1], \text{ and}\\
      R_2'&=&W^-_{4,1}R_2\overline{W}^-_{5,1}\overline{W}^-_{6,2}\\
      &=&          \left[
          \begin{array}[c]{cccc}
            U&&&-V\\&I\\ &&I\\D'&&&A'
          \end{array}\right]
\left[
         \begin{array}[c]{ccc|ccc}
           C& B& A \\ \hline
           -D&&&B& A&0\\&-D&& -C&0&A\\ &&-D&0&-C&-B\\
         \end{array}
  \right]\overline{W}^-_{5,1}\overline{W}^-_{6,2}\\
      &=&
          \left[
         \begin{array}[c]{ccc|ccc}
           UC& UB& H_1&0&VC&VB \\ \hline
           -D&&&B& A&0\\&-D&& -C&0&A\\D'C &D'B&0&0&-A'C&-A'B\\
         \end{array}
  \right]\left[
  \begin{array}[c]{cccccc}
    A'&&&&-V\\&I\\&&I\\&&&I\\D'&&&&U\\&&&&&I
  \end{array}
                                            \right] \overline{W}_{6,2}^{-}\\
      &=&
                    \left[
         \begin{array}[c]{ccc|ccc}
           C& UB& H_1&0&0&VB \\ \hline
           0&&&B& H_1&0\\
            &-D&& -C&0&A\\
            &D'B&0&0&-C&-A'B\\
         \end{array}
  \right]\left[
  \begin{array}[c]{cccccc}
  I\\  &-V&&&&A'\\&&I\\&&&I\\&&&&I\\&U&&&&D'
  \end{array}\right]=\left[
         \begin{array}[c]{ccc|ccc}
           C& B& H_1&&& \\ \hline
           0&&&B& H_1&0\\
            &0&& -C&0&H_1\\
            &&0&0&-C&-B\\
         \end{array}
  \right].
\end{eqnarray*} }
\end{widetext}
The proof can be now completed recursively.  Indeed, the rank
expression in Statement \ref{th:k-cyclic} is valid for $D=2$ which
corresponds to a GB code based on a cyclic group.  Assuming it is
valid for some $D\ge 2$, take a circulant matrix $N=\MM(n(x))$ and use
Eqs.~(\ref{eq:mbcD-1}), (\ref{eq:mbcD-2}), and (\ref{eq:mbcD-3}) to
construct the boundary operators for the dimension $D+1$.  Now, use
the reduction protocol described to eliminate matrices $N$ and replace
$A\to \MM(h_{a,n})$, where $h_{a,n}=\gcd(a,n,x^\ell-1)$.  By
assumption, the ranks at level $j$ are
$\rank Q_j=\binom{D-1}{j-1}(\ell-\kappa)$, where $\kappa=\deg h(x)$, and the
expression for the characteristic polynomial now includes all $D+1$
polynomials including $n(x)$.  This gives
\begin{eqnarray*}
  \rank R_j^{D+1}&=&\binom{D-1}{j-2}(\ell-\kappa)+\binom{D-1}{j-1}(\ell-\kappa)\\
  &=&\binom{D}{j-1}(\ell-\kappa),
\end{eqnarray*}
in agreement with the assumption.  This completes the
induction.\hfill\qed

\section{Proof of Statement \ref{th:k-semisimple}}
\label{sec:proof-k-semisimple}

The proof is based on the following:
\begin{lemma}
  \label{th:lemma-A-factor}
  For a $D$-complex ${\cal Q}=\mbc(A,B,\ldots)$, suppose one of the matrices,
  e.g., matrix $A$, which does not limit generality, can be factored
  out from the other matrices, that is, $B=B'A=AB'$, $C=C'A=AC'$,
  etc., then $\rank Q_j^{(D)}=\binom{D-1}{j-1}\rank A$.
\end{lemma}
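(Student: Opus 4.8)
The plan is to strip the common factor $A$ off every boundary operator of ${\cal Q}$ and reduce to a ``cofactor'' complex that is acyclic because one of its defining blocks is the identity. Take $A=A_1$, $B=A_2,\dots$ without loss of generality and write $A_i=A\,A_i'$ for $i\ge2$, $A_1'=I$. Since $A_1,\dots,A_D$ pairwise commute, I would first arrange that $A_1',\dots,A_D'$ pairwise commute as well---this is automatic in the AMC setting, where all blocks lie in the commutative algebra $F[G]$, and in general it suffices to pick the cofactors inside the commutative subring of $M_\ell(F)$ generated by the blocks. Then the $D$-complex $\widehat{\cal Q}\equiv\mbc(I,A_2',\dots,A_D')$ is well defined, and since each block of each boundary operator $Q_j$ of ${\cal Q}$ is $0$ or $\pm A_i=\pm A_i'\,A$ for some $i$, collecting the factor $A$ gives
\begin{equation*}
  Q_j=\widetilde Q_j\,(I_{m_j}\otimes A),
\end{equation*}
where $\widetilde Q_j$ is the $j$-th boundary operator of $\widehat{\cal Q}$ and $m_j=\binom{D}{j}$.

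Next I would restrict to the subspaces $(\im A)^{\oplus m_j}\subseteq F^{m_j\ell}$. Every block of $\widetilde Q_j$ commutes with $A$, hence maps $\im A$ into $\im A$, so $\widetilde Q_j$ restricts to $\overline Q_j\colon(\im A)^{\oplus m_j}\to(\im A)^{\oplus m_{j-1}}$, and these maps form a chain complex $\overline{\cal Q}$ because $\widehat{\cal Q}$ is one. Since $\im(I_{m_j}\otimes A)=(\im A)^{\oplus m_j}$, the factorization gives $\im Q_j=\widetilde Q_j\bigl((\im A)^{\oplus m_j}\bigr)=\im\overline Q_j$, whence $\rank Q_j=\rank\overline Q_j$. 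The key observation is that $\overline{\cal Q}$ is precisely the multi-block complex $\mbc(\mathrm{id}_{\im A},A_2'|_{\im A},\dots,A_D'|_{\im A})$ built from $D$ commuting operators on $\im A$, one of which is the identity; equivalently (after reordering the blocks, Statement~\ref{th:permutation-equiv}) it is the mapping cone of an identity chain map, cf.\ the recursion (\ref{eq:mbcD-1})--(\ref{eq:mbcD-3}) with $N=I$. Hence $\overline{\cal Q}$ is contractible, in particular acyclic. (Acyclicity can also be checked directly, as in Appendix~\ref{sec:proof-k-cyclic}: the identity block makes the first boundary map full row rank and the last full column rank, and the remaining ranks follow by induction.)

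It then remains to read off the ranks. Acyclicity of the $D$-complex $\overline{\cal Q}$, whose level-$j$ space has dimension $\binom{D}{j}\rank A$, gives $\rank\overline Q_j+\rank\overline Q_{j+1}=\binom{D}{j}\rank A$ for all $j$, with $\overline Q_0=\overline Q_{D+1}=0$; solving this telescoping recursion (starting from the relation with $j=0$) and using the identity $\sum_{i=0}^{j-1}(-1)^{j-1-i}\binom{D}{i}=\binom{D-1}{j-1}$ yields $\rank\overline Q_j=\binom{D-1}{j-1}\rank A$, and hence $\rank Q_j=\binom{D-1}{j-1}\rank A$, as claimed. I expect the first step to be the only real subtlety: making sure the cofactors $A_i'$ can be chosen to pairwise commute so that $\widehat{\cal Q}$, and therefore $\overline{\cal Q}$, is a genuine chain complex. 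Everything else is routine bookkeeping of the block sizes $m_j=\binom Dj$ and of the signs inherited from (\ref{eq:tp-boundary}) when verifying the factorization and the complex property of $\overline{\cal Q}$.
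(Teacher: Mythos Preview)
Your argument is correct and takes a genuinely different route from the paper's. The paper's proof is a direct Gaussian elimination: reorder so that $A$ is the \emph{last} block in the recursion (\ref{eq:mbcD-1})--(\ref{eq:mbcD-3}), so that $Q_j$ has the form
\[
\begin{pmatrix} Q_{j-1}^{(D-1)} & 0 \\ \pm I_{m}\otimes A & Q_j^{(D-1)} \end{pmatrix},\qquad m=\tbinom{D-1}{j-1};
\]
since every block of the off-diagonal pieces $Q_{j-1}^{(D-1)}$, $Q_j^{(D-1)}$ factors as $B=B'A$, $C=C'A$, \dots, they are cleared by row/column operations against the $A$-diagonal, leaving a matrix of rank $\binom{D-1}{j-1}\rank A$. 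Your approach instead pulls $A$ out of every $Q_j$ globally, restricts to $\im A$, and then invokes contractibility of the resulting complex (a mapping cone of the identity) to compute all ranks at once via the alternating-sum identity.

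Both routes rest on the same hidden hypothesis you correctly flag: the cofactors $A_i'$ must pairwise commute. In the paper's elimination, after clearing $Q_{j-1}^{(D-1)}$ and $Q_j^{(D-1)}$ one is left with a residual $\mp Q_{j-1}'(I\otimes A)Q_j'=\mp(I\otimes A)\,Q_{j-1}'Q_j'$ in the upper-right block, and this vanishes precisely when $\widehat{\cal Q}=\mbc(I,A_2',\dots,A_D')$ is a genuine chain complex, i.e., when the $A_i'$ commute. This is automatic in the only place the lemma is applied (Appendix~\ref{sec:proof-k-semisimple}, where everything lives in the commutative algebra $F[G]$), so neither proof has a real gap in context; you are simply more explicit about the assumption than the paper is. The paper's argument is more elementary and local to a single $Q_j$, while yours is more conceptual and yields all the ranks simultaneously from one acyclicity statement.
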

\begin{proof}
  Permute the terms in the product so that $A$ enters last, and use
  the equations (\ref{eq:mbcD-1}), (\ref{eq:mbcD-2}), and
  (\ref{eq:mbcD-3}).  The block $N=A$ enters $Q_j$ only along the
  diagonal of the square block of block-dimension $\binom{D-1}{j-1}$;
  the remaining matrices are above and to the right.  Using the
  assumed factorization, $B=B'A$, etc., row and column transformation
  can be used to diagonalize the matrix $Q_j$, leaving only the square
  block with $A$ along the diagonal.  This gives
  $\rank Q_j=\binom{D-1}{j-1}\rank A$.
\end{proof}

\noindent{\bf Proof of Statement \ref{th:k-semisimple}.} By
assumption, the group algebra $F[G]$ is semisimple; every ideal in
such a group algebra is generated by an idempotent element.  That is,
for any $a\in F[G]$, there is an idempotent element $e_a\in F[G]$
(necessarily $e_a=e_a^2$) such that $e_a a=a$ and, necessarily,
$(1-e_a)a=0$.  Then, given $\{a,b,c,\ldots\}\subset F[G]$, construct
the corresponding matrices $A=\MM(a)$, $B=\MM(b)$, etc, as well as the
idempotent matrices $E_a=\MM(e_a)$, $E_b=\MM(e_b)$, etc; all of these
matrices commute and satisfy, e.g., $E_AA=AE_A=A$, $(I-E_A)A=0$.  For
a $D$-complex $\amc(a,b,c,\ldots)$, there are $D$ group algebra
elements and $D$ corresponding idempotent matrices.

Further, by the same assumption, for any $a\in F[G]$, there exists a
pseudoinverse $x\equiv \underline a\in F[G]$ such that
$ax = e_a ax =e_a$.  Denote the corresponding matrix
$\underline A=\MM(\underline a)$, with the property
$A \underline A=E_A$.  Together with $A=E_AA$, this implies
$$
\rank ( E_{A}) = \rank (A).
$$

Given a binary string
$\bs \alpha=(\alpha_1,\alpha_2,\ldots,\alpha_D)\in \mathbb{F}_2^D$,
consider the idempotent matrices
$$
E_{\bs\alpha}\equiv \prod_{i=1}^D E_i^{\alpha_i}(I-E_i)^{1-\alpha_i}.
$$
By construction, these commute with all block matrices $A$, $B$,
\ldots, and satisfy the following orthogonality and completeness
properties,
\begin{equation}
\label{eq:completeness}
E_{\bs \alpha} E_{\bs \alpha'}=I_\ell\delta_{\bs\alpha,\bs \alpha'},\quad
\sum_{\bs \alpha\in \mathbb{F}_2^D} E_{\bs \alpha}=I_\ell.
\end{equation}
Consider a boundary operator (matrix) $Q_j$, $1\le j\le D$ of a known
block-size $r\times c$, and the corresponding matrix $Q_j(\bs \alpha)$
with each block row multiplied by $E_{\bs\alpha}$.  By properties of
idempotent matrices $E_{\bs\alpha}$, non-zero rows in
$Q_j(\bs \alpha)$ with different $\bs \alpha$ are linearly
independent.  Also, this is an invertible transformation, thus
$$\rank Q_j=\sum_{\bs \alpha\in \mathbb{F}_2^D}\rank Q_j(\bs \alpha).$$

Now, consider $Q_j(\bs \alpha)$ for $\bs \alpha\neq0$; without
limiting generality, assume $\alpha_1\neq0$, so that the matrix block
$E_{\bs \alpha}A\neq0$ has the same rank as the corresponding
idempotent $E_{\bs \alpha}$.  Due to the existence of the
pseudoinverse $\underline A$, for every other matrix block, e.g.,
originating from the matrix $B$, we have
$$
E_{\bs \alpha} B = E_{\bs \alpha} E_AA\underline AB=(E_{\bs \alpha}A)\,\underline AB.
$$
Evidently, thus projected blocks contain factors $E_{\bs \alpha} A$
and thus satisfy the conditions of the Lemma \ref{th:lemma-A-factor},
which gives
$$
\rank Q_j(\bs \alpha)=\binom{D-1}{j-1}\rank E_{\bs\alpha} A=\binom{D-1}{j-1}\rank E_{\bs\alpha},
$$
where $\alpha_1\neq 0$. The same construction works with other
non-zero $\bs \alpha$, while for $\bs \alpha=\bs 0$ the projected
matrix is identically zero, and thus $\rank Q_j(\bs0)=0$.  This gives,
finally, with the help of the completeness relation in
Eq.~(\ref{eq:completeness}):
\begin{eqnarray*}
  \rank Q_j&=&\sum_{\bs\alpha\in\mathbb{F}_2^D}\rank Q_j(\bs \alpha)\\
           &=&\sum_{\bs \alpha}(1-\delta_{\bs 0,\bs\alpha})\binom{D-1}{j-1}\rank E_{\bs \alpha}\\
           &=&\binom{D-1}{j-1}[\rank I_\ell -\rank E_{\bs 0}]\\
           &=&\binom{D-1}{j-1}(\ell -\kappa),\quad  \kappa\equiv\rank E_{\bs 0},
\end{eqnarray*}
with $\kappa$ being the rank of the products of idempotent matrices
$(I-E_A)$, $(I-E_B)$, \ldots, complementary to the commuting blocks
$A$, $B$, \ldots used to construct the $D$-complex.

The proof is completed by noticing that for $j=D$, the binomial
coefficient equals $1$, thus $\rank Q_D=\ell-\kappa$.\qed

\section{Proof of Statement \ref{th:k-d-rates}}
\label{sec:proof-k-d-rates}

The proof amounts to an observation that the construction gives a
$D$-dimensional QHP code constructed from the $F$-linear abelian group
code with the check matrix $A=\MM(a)$.  With $k=0$, all homology
groups are trivial, and all distances are infinite.  Otherwise, the
$1$-complex with the boundary operator $A$ has both subspaces of
dimensions $\ell$, both homology groups of dimensions $k$, and the
homological distances $d_0^{(1)}=1$, $d_1^{(1)}=d$. This gives the
parameters of the product complex: homology group dimensions $k_j$
follow immediately the K\"unneth formula (\ref{eq:tp-Kunneth}) as the
coefficient of $x^j$ in the Poincar\'e polynomial $(k+xk)^D$, subspace
dimensions similarly from Eq.~(\ref{eq:tp-nj}) written as a Poincar\'e
polynomial $(\ell+x\ell)^D$, and distances from Theorem 17 in
Ref.~\onlinecite{Zeng-Pryadko-hprod-2020}.

\section{Proof of Statement \ref{th:distance-h-upper}}
\label{sec:distance-h-upper}

It is trivial to verify that the Statement is true for $j=0$ and for
$j=D$.  Denote $r\equiv m_j^D=\binom{D}{j}$ and
$c\equiv m_{j+1}^D=\binom{D}{j+1}$ the block dimensions of subspaces
${\cal Q}_j$ and ${\cal Q}_{j+1}$, respectively.  For any non-zero
vector $\bs u$ in $C_h^\perp$, we can construct $r$ same-weight
vectors $\bs u_i$ in the space ${\cal Q}_j$ (with non-zero elements in
only one block, and that equal to $\bs u$) which satisfy the equation
$Q_{j}\bs u_i=\bs 0$.  We would like to show that all of them cannot
be homologically trivial at the same time.

To this end, define $g(x)=(x^\ell-1)/h(x)$ and write the non-zero
codeword as $u(x)=i(x)g(x)$, with $\deg i(x)<k$, and assume all $r$
vectors $\bs u_i$ to be trivial.  The corresponding condition can be
written as a matrix-polynomial equation
$$
u(x) I_{r\times r}=Q_j(x) \Xi_{r\times c}(x) \bmod x^\ell-1,
$$
where $\Xi_{r\times c}(x)$ is a matrix of unknowns of specified
dimensions, with elements in the ring $F[x]/(x^\ell-1)$.  With the
help of the invertible transformations similar to those constructed in
Sec.~\ref{sec:proof-k-cyclic}, matrix $Q_j(x)$ can be transformed to a
diagonal form,
$$
Q_j(x)=W_L(x) h(x)\Lambda_{r\times c} W_R(x),
$$ with square matrices
$W_L(x)$ and $W_R(x)$ invertible in $F[x]/(x^\ell-1)$, and
$\Lambda_{r\times c}$ with only
$$ \rank \Lambda_{r\times c}= m_j^{D-1}<\min(r,c)
$$ non-zero elements
(equal to 1), all in different rows and columns.  Using the fact that
the object on the left-hand-side (l.h.s.)\ is proportional to the
identity matrix, after a multiplication by $W_L^{-1}(x)$ on the left
and $W_L(x)$ on the right, we have
\begin{eqnarray*}
u(x)I_{r\times r}&=&W_L(x) h(x) \Lambda_{r\times c} W_R(x) \,\Xi_{r\times c}(x) \bmod x^\ell-1,\\
  u(x)I_{r\times r}&=& h(x) \Lambda_{r\times c} {\Xi'_{r\times c}(x)} \bmod x^\ell-1,
\end{eqnarray*}
where the arbitrary unknown coefficients in $\Xi(x)$ have been
replaced with the unknown coefficients in
$\Xi'(x)\equiv W_R(x) \,\Xi_{r\times c}(x)\,W_L(x)$; this
transformation is one-to-one since the transformation matrices are
invertible by construction.

If we pick a zero column in $\Lambda_{r\times c}$, the equation reads
$u(x)=0\bmod x^\ell-1$, contradicting the initial assumption.  Thus,
for any non-zero codeword in $C_h^\perp$, there is a non-trivial
vector in $H_j(\mathcal{Q})$ of the same weight, which proves the
stated upper bound on the distance.\qed

\bibliography{lpp,qc_all,more_qc,ldpc,linalg}

\end{document}